\DeclarePairedDelimiter{\set}{\lbrace}{\rbrace}
\DeclarePairedDelimiter{\abs}{\lvert}{\rvert}
\DeclarePairedDelimiter{\norm}{\lVert}{\rVert}
\DeclarePairedDelimiter{\of}{\lparen}{\rparen}
\DeclarePairedDelimiter{\sof}{\lbrack}{\rbrack}
\newcommand{\defeq}{\vcentcolon=}
\renewcommand{\leq}{\leqslant}
\renewcommand{\geq}{\geqslant}
\newcommand{\bra}[1]{\langle{#1}\rvert}
\newcommand{\ket}[1]{\lvert{#1}\rangle}
\newcommand{\ketbra}[2]{\ket{#1}\bra{#2}}
\newcommand{\proj}[1]{\ketbra{#1}{#1}}
\newcommand{\ct}{^{\dagger}}
\newcommand{\tp}{^{\mathsf{T}}}
\newtheorem{definition}{Definition}
\newtheorem{theorem}[definition]{Theorem}
\newtheorem{corollary}[definition]{Corollary}
\newtheorem{lemma}[definition]{Lemma}
\newtheorem{remark}[definition]{Remark}
\newtheorem{example}[definition]{Example}
\newcommand{\dd}{\mathrm{d}}
\newcommand{\dket}[1]{\vert {#1} \rangle\!\rangle}
\newcommand{\dketbra}[1]{\vert {#1} \rangle\!\rangle\!\langle\!\langle {#1} \vert}
\newcommand{\End}{\mathrm{End}}
\newcommand{\Tr}{\mathrm{Tr}}
\newcommand{\C}{\mathbb{C}}
\newcommand{\U}{\mathrm{U}}
\newcommand{\USch}{U_{\mathrm{Sch}}}
\newcommand{\UPW}{U_{\mathrm{PW}}}
\newcommand{\bUSch}{\bar{U}_{\mathrm{Sch}}}
\newcommand{\UdSch}{U_{\mathrm{dSch}}}
\newcommand{\cO}{\mathrm{cO}}%
\newcommand{\fO}{\mathrm{fO}}%
\newcommand{\iO}{\mathrm{iO}}%
\newcommand{\tO}{\mathrm{tO}}%
\newcommand{\CG}{\mathrm{CG}}
\newcommand{\dCG}{\mathrm{dCG}}
\newcommand{\Sch}{\mathrm{Sch}}
\newcommand{\dSch}{\mathrm{dSch}}
\newcommand{\GT}{\mathrm{GT}}
\newcommand{\SYT}{\mathrm{SYT}}
\newcommand{\SSYT}{\mathrm{SSYT}}
\newcommand{\margx}{\mathrm{marg}_x}
\newcommand{\margy}{\mathrm{marg}_y}
\newcommand{\e}{\mathbf{e}}
\newcommand{\1}{\mathds{1}}
\crefname{figure}{Fig.}{Figs.}
\crefname{lemma}{Lem.}{Lems.}
\crefname{theorem}{Thm.}{Thms.}
\crefname{section}{Sec.}{Secs.}
\crefname{equation}{Eq.}{Eqs.}
\crefname{table}{Tab.}{Tabs.}
\crefname{appendix}{Appendix}{Appendices}
\begin{document}

\preprint{APS/123-QED}

\title{Quantum simulation of random unitaries from Clebsch--Gordan transforms}

\author{Dmitry Grinko}
\thanks{All authors contributed equally to this work.\\
Dmitry Grinko: \href{mailto:d.grinko@uva.nl}{d.grinko@uva.nl}\\
Satoshi Yoshida: \href{mailto:satoshiyoshida.phys@gmail.com}{satoshiyoshida.phys@gmail.com}}
\affiliation{QuSoft, Amsterdam, The Netherlands}
\affiliation{Institute for Logic, Language and Computation, University of Amsterdam, The Netherlands} 
\affiliation{Korteweg-de Vries Institute for Mathematics, University of
Amsterdam, The Netherlands}

\author{Satoshi Yoshida}
\thanks{All authors contributed equally to this work.\\
Dmitry Grinko: \href{mailto:d.grinko@uva.nl}{d.grinko@uva.nl}\\
Satoshi Yoshida: \href{mailto:satoshiyoshida.phys@gmail.com}{satoshiyoshida.phys@gmail.com}}
\affiliation{Department of Physics, Graduate School of Science, The University of Tokyo, Japan}
\date{June 2026}

\begin{abstract}
    We construct exact compressed oracles for Haar-random group actions associated with an arbitrary finite-dimensional unitary representation $\rho:G\to\U(V)$ of a compact group.
    The construction is a representation-theoretic version of Zhandry's compressed-oracle technique: the memory of the oracle is stored in the Fourier basis, and each update is implemented by Clebsch--Gordan transforms.
    This framework naturally gives forward, conjugate, transpose, and inverse compressed oracles.
    For $G=\U(d)$ with the defining representation, we present efficient implementation based on high-dimensional Clebsch--Gordan transforms.
    We also explain how Ma--Huang's approximate path-recording oracle compares to our exact construction.
    For general compact groups, we describe the corresponding path-recording bases, achieved via generalized Schur transforms.
    These results clarify the relation between exact representation-theoretic compressed oracles and the path-recording bases used in algorithmic and cryptographic compressed-oracle arguments.
\end{abstract}

\maketitle

Random unitary operation offers a universal primitive for various quantum information processing including shadow tomography~\cite{huang2020predicting, zhao2021fermionic, elben2023randomized, bertoni2024shallow, wan2023matchgate, kunjummen2023shadow, levy2024classical, helsen2023shadow}, random sampling~\cite{boixo2018characterizing, arute2019quantum, ware2023sharp, hangleiter2023computational, morvan2024phase, fefferman2024anti}, randomized benchmarking~\cite{emerson2005scalable, knill2008randomized, dankert2009exact, magesan2011scalable, eisert2020quantum} and quantum random oracle model in cryptography~\cite{bellare1993random, boneh2011random, zhandry2019record, chen2024quantum, bouland2019computational, ananth2025pseudorandomness, hhan2024pseudorandom}.
It also offers a fundamental understanding of physical systems, including black holes and chaotic systems~\cite{page1993information, hayden2007black, hosur2016chaos, roberts2017chaos, nahum2017quantum, nahum2019operator, kudler2021distinguishing, brandao2021models, haferkamp2022linear, fisher2023random, suzuki2025quantum}.
Random distribution of $d$-dimensional unitary operation is modeled by the Haar measure.
There are three common ways to simulate the Haar-random unitary in the quantum circuit model: \emph{unitary $t$-design}, \emph{pseudorandom unitary (PRU)}, and \emph{compressed oracle}.

\emph{Unitary $t$-design} is given by a probability distribution $\{p_i\}_i$ on a finite set of unitary operators $\{U_i\}_i$ (we shortly write it as $\{p_i, U_i\}_i$) and it can simulate a quantum circuit having $t$ queries to a Haar-random unitary operation $U$, its complex conjugate $\bar{U}$, its transpose $U\tp$, and its inverse $U\ct$~\cite{dankert2009exact, ambainis2007quantum, mele2024introduction} [see Fig.~\ref{fig:simulation_random_unitary}~(a) and (b)].
\emph{Pseudorandom unitary (PRU)} is a probability distribution $\{p_i, U_i\}_i$ on $n$-qubit unitaries such that any polynomial-time quantum circuit cannot distinguish it from the Haar measure~\cite{ji2018pseudorandom}, where the distinguisher is allowed to query only $U_i$, or $U_i$ and $U_i\ct$, or $U_i, U_i\ct, \bar{U}_i$, and $U_i\tp$ depending on the setting.
\emph{Compressed oracle} is another way to simulate an action of random unitaries by using quantum memory to purify the classical randomness associated with the underlying measure.
It is defined as unitary operations $\fO, \cO, \tO, \iO$ called the forward, conjugate, transposed, and inverse oracles, respectively, acting on a system and auxiliary memory system.
By tracing out the memory system in the end, it simulates the forward query $U$, the conjugate query $U^*$, the transposed query $U\tp$, and the inverse query $U\ct$ of the Haar-random unitary $U$, respectively [see Fig.~\ref{fig:simulation_random_unitary}~(c)].
This is a natural generalization of the same problem for random functions~\cite{zhandry2019record}, and it was recently used in the adaptive security proof of a certain PRU construction~\cite{ma2024}, and can be used to construct cryptographic protocols such as quantum money~\cite{wiesner1983conjugate, alagic2020efficient}.
These notions are extended to several subgroups of the unitary groups, e.g., orthogonal $t$-design is defined for the orthogonal group~\cite{o2023explicit}, and pseudorandom permutation~\cite{zhandry2025note} and the compressed permutation oracle~\cite{unruh2023towards, majenz2025permutation} are defined for the permutation group.

\begin{figure}
    \centering
    \includegraphics[width=0.92\linewidth]{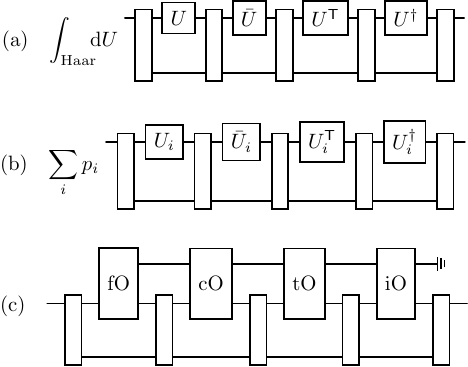}
    \caption{(a) A quantum circuit involving $t$ queries to a unitary operations $U,\bar{U},U\tp$ and $U\ct$, where $U$ is drawn from the Haar measure.
    The boxes other than $U$ represent arbitrary quantum channels.\\
    (b) Unitary $t$-design $\{p_i, U_i\}_i$ can simulate the quantum circuit (a) by using $U_i$ with probability $p_i$.\\
    (c) Our construction simulates the quantum circuit (a) exactly by using the compressed oracles $\fO,\cO,\tO,\iO$ and tracing out the auxiliary register.}
    \label{fig:simulation_random_unitary}
\end{figure}

\begin{figure*}
    \centering
    \includegraphics[width=\textwidth, page=1]{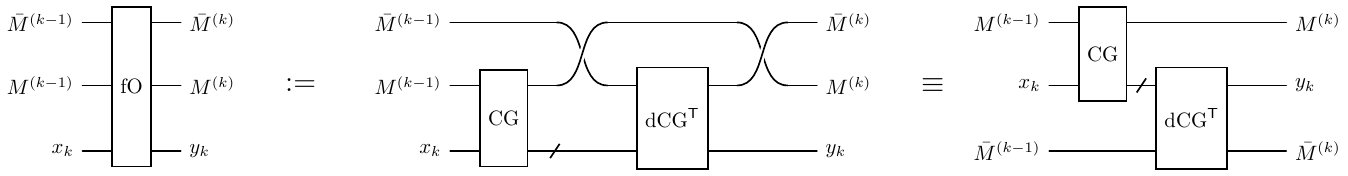}
    \caption{Compressed oracle $\fO$ for an arbitrary group $G$ written in standard quantum circuit notation (time goes from left to right).
    $x_k$ denotes the input, $y_k$---the output, while $M$ and $\bar{M}$ label basis states of the ancilla registers, which store irreducible representations of $G$.
    Clebsch--Gordan transform decomposes tensor product of representations $\lambda \otimes \rho$, where $\lambda$ is some irrep and $\rho$ is a given unitary representation of $G$.
    Such decomposition is in general not multiplicity-free, and the register carrying the multiplicity is highlighted with a slash wire.}
    \label{fig:pr_O_U}
\end{figure*}

Approximate unitary $t$-design can be implemented efficiently~\cite{emerson2005convergence, harrow2009random, brown2010convergence, brandao2016local, nakata2017unitary, nakata2017efficient, hunter2019unitary, haferkamp2021improved, ho2022exact, liu2022estimating, haferkamp2022random, jian2023linear, haferkamp2023efficient, harrow2023approximate, chen2024efficient, metger2024simple, haah2024efficient, chen2024incompressibility, belkin2024approximate, schuster2024random, laracuente2024designs, cui2025unitary} with the state-of-the-art circuit complexity $\tilde{O}(nt)$ given in Refs.~\cite{schuster2024random, laracuente2024designs, cui2025unitary} matching with the lower bound $\tilde{\Omega}(nt)$ shown in Ref.~\cite{brandao2021models}.
Known implementation of exact $t$-design is highly inefficient~\cite{nakata2021quantum} except for $t\leq 3$~\cite{dankert2009exact, webb2015clifford, zhu2017multiqubit}.
The recent breakthrough in the construction of secure PRU~\cite{ma2024}, based on the PFC construction of Ref.~\cite{metger2024simple}, shows the adaptive security proof based on the compressed oracle called the path-recording oracle simulating the Haar-random unitary approximately with the forward and inverse queries.
However, the path-recording oracle cannot choose the precision arbitrarily.
Reference~\cite{alagic2020efficient} proposes an exact simulation of the forward query of a random unitary with efficient memory size, but it is not constructive, and its efficiency in circuit complexity is not known.
The extension to subgroups of the unitary group is less clarified; e.g., an efficient construction of the compressed permutation oracle is a long-standing open problem~\cite{unruh2023towards, majenz2025permutation}.
In another line of research, the work \cite{bostanci2024generalquantumdualityrepresentations} introduced a duality between so-called Fourier subspace extraction and implementation of group representation, which was then used in the construction of quantum money.

In this work, we present an exact construction of the compressed oracle for the Haar random ensemble corresponding to any unitary representation $\rho: G\to \End (V)$ of a compact group $G$.
We provide an efficient construction for the unitary group ($G=\U(d), \, \rho(U) = U$) with the circuit complexity given by $\mathrm{poly}(n, t, \log \epsilon^{-1})$ with the compilation error $\epsilon$ and $n = \log(d)$.
Our construction is based on basic facts from the representation theory of compact groups and efficient implementation of the (dual) Clebsch-Gordan (CG) transforms for the unitary group~\cite{bacon2006efficient, bacon2007quantum, nguyen2023mixed, grinko2023gelfand, fei2023efficient, burchardt2025krovi}.
Moreover, we can simulate not only forward queries, but also conjugate, transpose and inverse queries.
This versatility is quite interesting, and in the light of recent attention to conjugate and transpose queries \cite{zhandry2025model} our work unifies simulation of all four query types.
We conjecture that this generalization could be efficiently implemented for a wide variety of groups.
In particular, for the permutation group it gives an exact Fourier-basis compressed oracle for the random-permutation model~\cite{unruh2023towards,majenz2025permutation}.
Our simulator can also be used to twirl a given quantum supermap, which can convert algorithmic errors in certain tasks to a white-noise error in higher-order quantum transformations of unitary channels~\cite{quintino2022deterministic}.

{\it Main results}.---
Now we state informally our two main results. 
First, for an arbitrary given compact group $G$ together with some unitary representation $\rho: G \rightarrow \End(V)$, where $\End(V)$ is the space of linear operators on a finite-dimensional linear space $V$.
Second, specifically for the unitary group $\U(d)$ and $\rho$ being the defining representation (labelled by Young diagram $\square$), we explain how to efficiently implement forward queries.
We briefly explain the main ideas behind the proofs, while the full proofs could be found in Appendix~\hyperref[sec:app_proof_main]{B}.

\begin{theorem}[Informal] \label{thm:main_1}
For any compact group, there exists exact compressed oracles $\fO,\mathrm{cO},\mathrm{iO}$ and $\mathrm{tO}$, which can simulate respectively forward, conjugate, inverse and transpose of an action of Haar random group elements in a given unitary representation. 
These oracles can be easily constructed from two Clebsch--Gordan transforms.
\end{theorem}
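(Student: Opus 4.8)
\emph{Proof plan.}
I would construct the oracle in two conceptual stages: an \emph{uncompressed} purification of the Haar twirl living on a group register, followed by a reduction to a finite-dimensional memory on which the oracle is nothing but a pair of (dual) Clebsch--Gordan transforms.

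\emph{Stage 1 (purifying the Haar twirl).} On the query space $V$ tensored with the group register $L^2(G)\cong\bigoplus_\lambda \lambda\otimes\bar\lambda$ (Peter--Weyl), define the unitary $W\defeq\int_G R(U)\otimes\ketbra{U}{U}\,\dd U$, which acts on a $V$-valued function $f$ on $G$ as $(Wf)(U)=R(U)f(U)$, and initialize the group register in the translation-invariant (trivial-representation) vector $\int_G\ket{U}\,\dd U$. One first checks that $W$ is unitary --- it is block diagonal in $U$ with unitary blocks $R(U)$ --- and then, using Schur orthogonality of matrix coefficients, verifies that interleaving $t$ copies of $W$ with arbitrary side channels and tracing out the group register reproduces \emph{exactly} the $t$-query Haar-twirled action of $R$. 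This is the representation-theoretic analogue of initializing Zhandry's database register in uniform superposition.

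\emph{Stage 2 (compression).} Applying $W$ at most $t$ times to the invariant vector, the group register never leaves the finite-dimensional subspace $\mathcal{H}_t$ spanned by the Peter--Weyl blocks of the irreps occurring in the first $t$ tensor powers of $R$ (and of $\bar R$); finiteness holds because each $R^{\otimes k}$ is finite-dimensional, and this is the only place where properties of $R$ enter, so the argument survives even for infinite $G$. One shows that $\mathcal{H}_t$ is $W$-invariant and that $W$ restricted to $V\otimes\mathcal{H}_t$ remains unitary, and declares this restriction to be $\fO$, with the memory register decomposed into a ``ket-side'' part carrying an irrep label, a multiplicity index $M$ and a vector in the irrep, and a ``bra-side'' part carrying the dual data $\bar M$, exactly as depicted in Fig.~\ref{fig:pr_O_U}.

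\emph{Stage 3 (Clebsch--Gordan structure and the four query types).} The key observation is that pointwise multiplication by the matrix coefficients of $R$ is, in the Peter--Weyl picture, precisely the operation ``tensor the stored irrep with $R$ and re-decompose'': acting with $W$ takes $\ket{\lambda,M,v}\otimes\ket{x}$ to a superposition over irreps $\mu\subset\lambda\otimes R$ produced by the Clebsch--Gordan transform of $\lambda\otimes R$ on the ket-side and $V$ registers, followed by the dual Clebsch--Gordan transform on the bra-side register to bring it back to the $(\bar\mu,\bar M')$ form. Thus $\fO$ is literally assembled from two Clebsch--Gordan transforms. Running the same recipe with $\bar R$ in place of $R$ yields $\cO$; composing additionally with the fixed unitary implementing $U\mapsto U^{-1}$ on $V$ (a symmetry of the Haar measure) yields $\tO$; and replacing both Clebsch--Gordan transforms by their dual versions (``absorb from the memory into the query'') yields $\iO$. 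In every case only two Clebsch--Gordan transforms are used, which proves Theorem~\ref{thm:main_1}. The delicate point --- the step I expect to be the main obstacle --- is \emph{exactness}: one must verify that $\fO$ stays exactly unitary on the truncated memory (no leakage out of $\mathcal{H}_t$ on subsequent queries) and that the Clebsch--Gordan/dual-Clebsch--Gordan implementation matches $W$ on the nose, with the correct multiplicities, normalizations and phases, since a single mismatched normalization would downgrade the exact simulation to an approximate one. Keeping the multiplicity registers $M$ and $\bar M$ consistent across all four query types is the bookkeeping that will demand the most care.
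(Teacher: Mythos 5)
Your proposal is correct in outline, but it is organized quite differently from the paper's argument, so a comparison is worthwhile. You derive the compressed oracle \emph{top-down}: first a manifestly correct uncompressed purification $W=\int_G R(U)\otimes\proj{U}\,\dd U$ on $V\otimes L^2(G)$ seeded with the trivial-representation vector, then compression to the finitely many Peter--Weyl blocks reachable after $t$ queries, and finally the identification of pointwise multiplication by matrix coefficients with a pair of (dual) Clebsch--Gordan transforms. The paper instead works \emph{bottom-up}: it takes the circuit of Fig.~\ref{fig:pr_O_U} as the definition, rewrites the $t$-fold composition in the Heisenberg picture as a tensor-network contraction of $\CG$/$\dCG$ tensors equal to $\sum_\lambda d_\lambda^{-1}\sum_{T,S}\bra{\hat{x}}E^\lambda_{T,S}\ket{x}\bra{\hat{y}}E^\lambda_{T,S}\ket{y}$ (\cref{thm:main_2}), and separately proves via Schur orthogonality that this sum of commutant matrix units equals the Haar integral (\cref{lemma:mat_units}). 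The two routes rest on the same ingredients (Peter--Weyl plus grand orthogonality plus CG recoupling); yours makes the analogy with Zhandry's construction and the \emph{reason} for exactness transparent, while the paper's yields the explicit matrix-unit formula that it then reuses for the supermap-twirling application.

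Two points in your plan need repair, though neither is fatal. First, $\mathcal{H}_t$ is \emph{not} $W$-invariant: multiplication by matrix coefficients of $R$ maps the $\lambda$-block into the blocks $\mu\subset\lambda\otimes R$, so $W$ maps $V\otimes\mathcal{H}_{k}$ isometrically into $V\otimes\mathcal{H}_{k+1}$; the correct statement is that the $k$-th query is an isometry into a larger memory (this is exactly why the paper's memory grows ``on the fly''), not a unitary on a fixed truncation. Second, your description of $\tO$ via ``the fixed unitary implementing $U\mapsto U^{-1}$ on $V$'' is not meaningful as stated --- no such unitary on $V$ exists; what you want is to precompose the multiplication operator with the (Haar-measure-preserving) inversion map on the group argument, so that one multiplies by $\bar{R}(U^{-1})=R(U)\tp$, which in the circuit amounts to the ``vertical swap'' of $\CG$ and $\dCG$ in \cref{fig:4_types}. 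Finally, the normalization obstacle you rightly flag is resolved by the known relation between normal and dual CG tensors carrying the factor $\sqrt{d_\lambda/d_\nu}$ (the paper's Eq.~\eqref{4:eq:dual_normal_CG}); carrying that factor through is precisely what produces the $1/d_\lambda$ weights required by Schur orthogonality.
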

\begin{proof}[Proof idea]
We use the notation introduced in Appendix~\hyperref[sec:notation_basic_notions]{A} for the representation theory.
The main idea behind the proof technique is schematically described in \cref{fig:proof_idea} for the case of forward queries $\fO$: the quantum circuit consisting of our compressed oracles from \cref{fig:pr_O_U} can be easily seen to be equivalent to the tensor network contraction involving Clebsch--Gordan tensors, which is the tensor network representation of the Clebsch--Gordan transform.
Namely, top and bottom tensor networks in the middle of the equation in Fig.~\ref{fig:proof_idea} correspond to matrix units $E^\lambda_{T,S}$ of the commutant of the tensor action $\rho^{\otimes t}$.
The summation is done over all possible matrix units $E^\lambda_{T,S}$ of the commutant for $\lambda\in \hat{G}^{(t)}$, $S = (S_0, s_1, S_1, \ldots, s_t, S_t), T = (T_0, t_1, T_1, \ldots, t_t, T_t) \in B(\lambda)$.
To finish the proof, \cref{thm:app_forward_oracle} identifies the oracle contraction with the sum of matrix units, and \cref{lemma:mat_units} identifies that sum with the required Haar integral.
The formal extension to arbitrary mixed sequences of forward, conjugate, transpose, and inverse queries is given in \cref{thm:app_all_query_oracles}.
The high-dimensional data needed for mixed $\U(d)$ Clebsch--Gordan transforms is discussed in \hyperref[app:mixed_cg_transforms]{Appendix~D.1}.
The full proof can be found in Appendix~\hyperref[sec:app_proof_main]{B}.
\end{proof}

The same construction also has a useful database-basis interpretation.
The Fourier memory of the oracle is naturally organized by the Peter--Weyl theorem: after $t$ forward queries it consists of a pair of matched Peter--Weyl registers for the degree-$t$ matrix coefficients of $\rho^{\otimes t}$.
In the appendices we describe how, for $G=\U(d)$, this Fourier basis can be changed to a basis of unordered input-output records using Schur transforms.
For general compact groups this motivates an abstract database space obtained from the commutant-invariant part of $V^{\otimes t}\otimes \bar V^{\otimes t}$, which is the representation-theoretic analogue of Ma-Huang path-recording for arbitrary groups.

\begin{figure*}[htbp]
    \centering
    \includegraphics[width=0.8\textwidth, page=2]{figures/circuits.pdf}
    \caption{Four types of oracles based on the corresponding query types: forward $\fO$, conjugate $\cO$, transpose $\tO$, inverse $\iO$.}
    \label{fig:4_types}
\end{figure*}

\begin{figure*}[htbp]
    \centering
    \includegraphics[width=\textwidth, page=3]{figures/circuits.pdf}
    \vspace{2pt} $=$
    \large $$\hspace*{5pt}\sum_{\lambda \in \widehat{G}^{(t)}} \sum_{\substack{T,S \in B(\lambda)}}\hspace*{-10pt} \includegraphics[width=0.9\textwidth, valign=c, page=11]{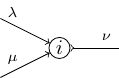}$$
    \normalsize $=$\vspace{-5pt}
    \large $$\int_{\text{Haar}} dg \, \bra{y_t}\rho(g)\ket{x_t} \dotsb \bra{y_1}\rho(g)\ket{x_1} \cdot \bra{\hat{x}_1}\rho(g^{-1})\ket{\hat{y}_1} \dotsb \bra{\hat{x}_t}\rho(g^{-1})\ket{\hat{y}_t}$$
    \caption{
    Proof idea behind the construction of our compressed oracle $\fO$. 
    The top figure is drawn in the Heisenberg picture and represents $t$ queries to the oracle $\fO$.
    The middle figure is the rewriting of the top one in terms of (dual) Clebsch--Gordan tensors, which comprise matrix units of the commutant of $\rho^{\otimes t}$ action.
    The bottom is an equivalent Haar integral expression.
    The equalities are proven in detail in Appendix~\hyperref[sec:app_proof_main]{B} of the SM~\cite{supple}.
    }
    \label{fig:proof_idea}
\end{figure*}

\begin{theorem} \label{thm:main_2}
    Successive application of forward oracles $\fO$ can simulate $t$ queries of the Haar random unitary group $\U(d)$ elements with total gate and depth complexity $t^5 \mathrm{polylog}(d, \epsilon^{-1})$. The memory cost is $t^2 \mathrm{polylog}(d, \epsilon^{-1})$.
\end{theorem}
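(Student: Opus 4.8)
The plan is to take the explicit circuit for $\fO$ provided by \cref{thm:main_1} (see \cref{fig:pr_O_U}), specialize it to $G=\U(d)$ with $R$ the defining representation $\square$, and then establish two things: (i) over the course of $t$ sequential queries the ancilla registers never leave a subspace encodable in $O(t^2\,\mathrm{polylog}(d,\epsilon^{-1}))$ qubits, and (ii) a single call to $\fO$ reduces, up to elementary controlled copy/compare/permutation subroutines, to a constant number of Clebsch--Gordan transforms $\CG,\dCG$ acting on irreps of $\U(d)$ with at most $t$ boxes, each of which admits an efficient circuit. Since the $t$ oracle calls are interleaved with the algorithm and hence applied one after another, both the total gate count and the total depth are $t$ times the per-call cost, and the memory is just the size of the reachable ancilla; so everything reduces to (i) and (ii).

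For the memory bound I would argue by induction on the query number $k\le t$ that after $k$ calls the ancilla is supported on basis states indexed by a branching path $\mathbf 1=\lambda^{(0)},\lambda^{(1)},\dots,\lambda^{(k)}$, where each $\lambda^{(j+1)}$ arises from $\lambda^{(j)}$ by one (dual) Clebsch--Gordan step together with its multiplicity label. The $\U(d)$ branching rules then force every $\lambda^{(j)}$ to be an $O(t)$-part highest weight whose entries deviate from a fixed reference weight by at most $t$ in absolute value (the entries themselves may be $\Theta(d)$ in the ``staircase'' components produced by $\dCG$), so one label costs $O(t\log d)$ bits and the whole path costs $O(t^2\log d)$ bits. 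Keeping the full path, rather than just $\lambda^{(k)}$, is exactly what turns the $t$ into $t^2$; it is needed both to apply the later transforms and to retain a valid compressed-oracle record. Compiling the approximate CG transforms to precision $\epsilon/\mathrm{poly}(t)$, so that the errors of the $t$ calls sum to at most $\epsilon$, only inflates the internal ancilla sizes by $\mathrm{polylog}(\epsilon^{-1})$, giving the stated $t^2\,\mathrm{polylog}(d,\epsilon^{-1})$.

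For the gate and depth complexity I would unfold one $\fO$ into the controlled $\CG$ and controlled $\dCG$ of \cref{thm:main_1}, acting on irrep/multiplicity registers of the $O(t\log d)$ size above, surrounded by $O(1)$ copy/compare/permutation routines on the $x_k,y_k$ registers that cost $\mathrm{polylog}(d)$ gates. Plugging in the efficient constructions of the (mixed) Schur and Clebsch--Gordan transforms for the unitary group of Refs.~\cite{bacon2006efficient,bacon2007quantum,nguyen2023mixed,grinko2023gelfand,fei2023efficient,burchardt2025krovi}, which implement one such transform on $\le t$-box irreps of $\U(d)$ in $\mathrm{poly}(t)\,\mathrm{polylog}(d,\epsilon^{-1})$ gates and depth, one $\fO$ costs $\mathrm{poly}(t)\,\mathrm{polylog}(d,\epsilon^{-1})$; with the best known bound on that polynomial the $t$ sequential calls add up to $t^5\,\mathrm{polylog}(d,\epsilon^{-1})$ in both gate count and depth. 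Exactness of the emulation --- that tracing out the ancilla reproduces $t$ Haar queries on the nose --- is inherited from \cref{thm:main_1} (via \cref{lemma:mat_units}); only efficiency is in question here.

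I expect the main obstacle to be step (ii). The combinatorial half of (i) --- that the ancilla stays in the small subspace, plus a faithful reversible encoding of the reachable branching paths on which the CG circuits can act directly --- is a clean induction on the $\U(d)$ branching rules and should be routine. The delicate point is assembling, from the cited works, an implementation of precisely the $\CG$ and $\dCG$ appearing in \cref{fig:pr_O_U} --- mixed (box-adding \emph{and} box-removing) transforms carrying a nontrivial multiplicity register --- with complexity polylogarithmic in $d$, and carefully tracking how the $\mathrm{poly}(t)$ factors and compilation errors of those subroutines propagate to the final $t^5$ bound. That bookkeeping, rather than any single conceptual difficulty, is where the real work lies, and it is what distinguishes the efficient unitary-group statement from the mere-existence statement of \cref{thm:main_1}.
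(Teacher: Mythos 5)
Your overall strategy coincides with the paper's: one $\fO$ call unfolds into a constant number of (dual) Clebsch--Gordan transforms, the $k$-th of which costs $\widetilde{O}(k^4)$ gates and depth by the efficient constructions of \cite{nguyen2023mixed,grinko2023gelfand,burchardt2025krovi}, so the $t$ sequential calls give $\sum_{k\leq t}\widetilde{O}(k^4)=\widetilde{O}(t^5)$, with exactness inherited from \cref{thm:main_1} and $\epsilon$ arising only from compilation. That part of your plan matches Appendix~B.

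The gap is in your memory argument, and it sits precisely at the step you dismiss as ``routine''. You account only for the \emph{irrep labels} (highest weights) along the branching path and conclude $t\times O(t\log d)=O(t^2\log d)$ bits, attributing the quadratic growth to retaining the whole path. But the ancilla registers in \cref{fig:pr_O_U} store \emph{basis vectors inside} the irreps: for $\U(d)$ these are Gelfand--Tsetlin patterns $M$, which naively have $\Theta(d)$ rows and hence $\Theta(d^2)$ entries; storing them directly would destroy not only the memory bound but also the gate count, since every CG step would act on a $\Theta(d^2)$-sized register. The actual source of the $t^2$, and the main technical content of the paper's proof, is the reversible compression $M\cong(\widetilde{M},p)$, where $\widetilde{M}\in\mathrm{GT}(\lambda,\mu)$ is a pattern with only $\ell(\mu)\leq k$ rows (so $O(k^2)$ entries) and $p$ is an alphabet map recording which of the $d$ letters actually occur; the preprocessing gate $P$ (operations $A,B,C,D$, each $O(k^3)$) maintains this encoding on the fly as each new symbol $x_k$ arrives, so that the compressed $\widetilde{\mathrm{CG}}$ transform only ever touches $O(k^2\,\mathrm{polylog}(d))$ qubits. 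A path of highest weights alone is not what the ancilla holds and would not suffice to run the oracle; to close your argument you must introduce (or import from \cite{burchardt2025krovi}) this compressed Gelfand--Tsetlin encoding and check that it is updated reversibly and efficiently across queries.
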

\begin{proof}[Proof idea]
The main idea behind this theorem is that for the case of the unitary group $\U(d)$, the construction of the $t$-th CG transform is efficient in $t$ and $d$, having $\widetilde{O}(t^4)$ gate and depth complexity and $\widetilde{O}(t^2)$ memory complexity, where $\widetilde{O}(\cdot)$ hides polylogarithmic factors in $d$ and $\epsilon^{-1}$.
See Appendix~\hyperref[app:cg_trans]{D} of the Supplementary Material (SM)~\cite{supple} for details.
If we have in total $t$ calls to simulate, then the total time and depth complexity is given by $\sum_{k=1}^t \widetilde{O}(k^4) = \widetilde{O}(t^5)$, while the memory complexity is $\widetilde{O}(t^2)$ since we are reusing and adding new memory ``on the fly''.
This construction relies on the fact that we can efficiently and reversibly compress Gelfand--Tsetlin patterns, which label basis vectors of irreducible representations (irreps) of the unitary group $\U(d)$.
\end{proof}

\begin{figure*}[htbp]
    \centering
    \includegraphics[width=\linewidth]{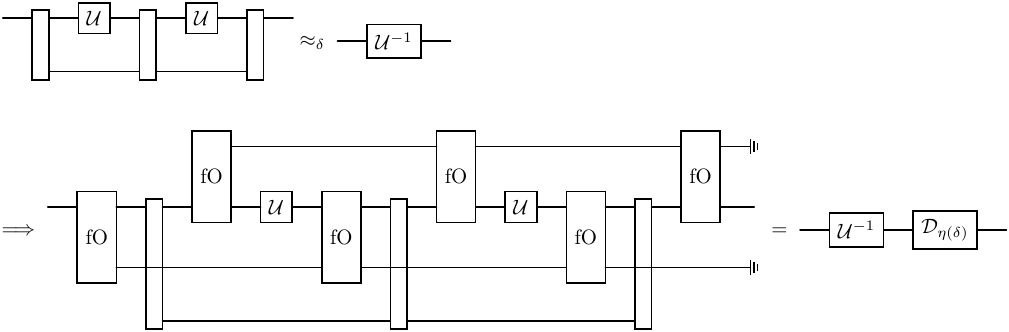}\\
    \caption{Twirling of the approximate unitary inversion protocol using the forward compressed oracle $\fO$.
    The top figure corresponds to a quantum comb that approximately implements unitary inversion with $n$ queries to the input unitary channel $U$ with the average-case channel fidelity $F = 1-\delta$ defined in Eq.~\eqref{eq:fidelity}.
    The bottom figure corresponds to the twirled quantum comb that transforms $n$ queries of $U$ into the channel given by $\mathcal{D}_{\eta(\delta)} \circ \mathcal{U}^{-1}$, where $\mathcal{D}_{\eta(\delta)}$ is a depolarizing channel with the noise parameter given by $\eta(\delta) = {d^2 \over d^2-1} \delta$.}
    \label{fig:twirling_inversion}
\end{figure*}

{\it Comparison with the previous works}.---
We compare the construction of the compressed oracles for $t$ queries of the Haar-random $n$-qubit unitary with the previous works~\cite{alagic2020efficient, ma2024} (see Tab.~\ref{tab:comparison}).
The exact simulation of the Haar-random unitary shown in Ref.~\cite{alagic2020efficient} uses $O(nt)$ qubits, but it is not constructive, and the depth is not bounded.
The path-recording oracle shown in Ref.~\cite{ma2024} is efficient in memory and depth, but its precision is fixed to be $O(t^2/2^n)$, where the precision is given by the worst-case diamond-norm error between the quantum channels given in Figs.~\ref{fig:simulation_random_unitary}~(a) and (c), where we take the worst case with respect to the quantum channels inserted in between the random unitaries.
The database-basis form in Appendix~\hyperref[sec:exact_path_recording_oracle]{C} makes this relation explicit.
After changing the basis of our Fourier ancilla, the exact oracle has a path-recording support rule, and the Ma--Huang oracle is an approximation to this exact oracle.
Our construction provides efficiency in memory and depth, and its precision can be chosen arbitrarily since the error only comes from the compilation error of implementing the CG transforms.

{\it Application for the quantum cryptography}.---
Our random unitary simulator can be applied\footnote{Oracle accesses to $V_g$ and $V_{g^{-1}}$ are equivalent to oracle accesses to $U_g$ and $U_{g^{-1}}$ for the following more standard oracle, after identifying the alphabet with $\mathbb Z_d$:
    \begin{align}
    \label{eq:def_Ug}
        U_g\coloneqq \sum_{x,y\in\mathbb Z_d} \ketbra{x, y\oplus_d g(x)}{x,y},
    \end{align}
    where $\oplus_d$ represents addition in $\mathbb Z_d$, since either can be simulated using two queries to the other [see \hyperref[sec:equivalence_permutation_oracle]{Appendix~E.2} of the SM~\cite{supple} for the details].} for the permutation group $G = \mathfrak{S}_{d}$ with the representation
\begin{align}
\label{eq:def_Vg}
    g\in \mathfrak{S}_{d}\mapsto V_g\coloneqq \sum_{x\in\mathbb Z_d} \ketbra{g(x)}{x} \in \End(\mathbb{C}^d).
\end{align}
This construction gives an exact Fourier-basis simulator for a Haar-random permutation.
After the ordinary group QFT on $\mathfrak{S}_d$, the memory has two Peter--Weyl registers $\ket{\lambda,P,Q}$.
Each query is updated by applying the Clebsch--Gordan transform for tensoring with the defining permutation representation to both Peter--Weyl indices.
An efficient implementation would therefore follow from efficient reversible implementations of these $\mathfrak{S}_d$ Clebsch--Gordan transforms.

{\it Application for the quantum supermaps}.---
Reference~\cite{quintino2022deterministic} shows that any approximate unitary inversion protocol can be converted to a unitary inversion protocol with the white-noise error.
This conversion is done by twirling the corresponding Choi matrix, but it is unknown how the conversion is done on the level of a quantum circuit.
Our simulator makes this conversion possible as shown in Fig.~\ref{fig:twirling_inversion}, which uses the forward compressed oracle $\fO$ to implement the twirling of the quantum comb.
Suppose we have a quantum comb $\mathcal{C}$ approximately implementing unitary inversion with $n$ queries to the input unitary channel $U\in U(d)$ with the average-case channel fidelity $F = 1-\delta$, where $F$ is defined by
\begin{align}
    \label{eq:fidelity}
    F \coloneqq \int \dd U F_{\mathrm{ch}}(\mathcal{C}[\mathcal{U}^{\otimes n}], \mathcal{U}^{-1}),
\end{align}
where $\dd U$ is the Haar measure of $\U(d)$, $\mathcal{U}(\cdot) = U \cdot U^{-1}$ and $\mathcal{U}^{-1}(\cdot) = U^{-1} \cdot U$ are the unitary channel corresponding to $U$ and $U^{-1}$, and $F_{\mathrm{ch}}(\mathcal{C}[\mathcal{U}^{\otimes n}], \mathcal{U}^{-1})$ is the channel fidelity given by $F_{\mathrm{ch}}(\mathcal{C}[\mathcal{U}^{\otimes n}], \mathcal{U}^{-1})\coloneqq {1\over d^2} \sum_i \abs{\Tr(K_i U)}^2$ using the Kraus operators $\{K_i\}$ satisfying $\mathcal{C}[\mathcal{U}^{\otimes n}](\cdot) = \sum_i K_i \cdot K_i^\dagger$.
Then, we can construct a quantum comb that transforms $n$ queries of $U$ into the channel given by
\begin{align}
    \mathcal{D}_{\eta(\delta)} \circ \mathcal{U}^{-1},
\end{align}
where $\mathcal{D}_{\eta(\delta)}$ is a depolarizing channel defined by
\begin{align}
    \mathcal{D}_{\eta(\delta)}(\cdot) = [1-\eta(\delta)] \cdot + \eta(\delta) {\1_d \over d} \Tr[\cdot]
\end{align}
with the noise parameter given by $\eta(\delta) = {d^2 \over d^2-1} \delta$.
This conversion can be extended to other tasks, such as unitary transposition~\cite{quintino2022deterministic} and unitary complex conjugation~\cite{miyazaki2019complex,ebler2023optimal} using the corresponding compressed oracles.
The conversion to the white-noise error is beneficial in two ways.
First, this conversion makes the algorithmic error independent of the input unitary channel $U$, and the worst-case error of the converted protocol is the same as the average-case error of the original protocol~\cite{quintino2022deterministic}.
Secondly, the white-noise error can be treated more easily than the general error, e.g., the white-noise error can be mitigated in a cost-optimal way using the rescaling method~\cite{tsubouchi2023universal}, and the white-noise error on the quantum state can be efficiently purified by using multiple rounds of the swap test~\cite{childs2025streaming} or using the Schur sampling~\cite{li2024optimal}.

{\it Conclusion.}---
This Letter presents an exact implementation of the Haar-random ensemble of any unitary representation of a compact group using the (dual) CG transforms.
This construction provides an efficient simulation of the Haar-random unitary with arbitrary precision.
For the permutation group, our construction gives an exact Fourier-basis compressed oracle, whose efficient implementation reduces to efficient $\mathfrak{S}_d$ Clebsch--Gordan transforms.
Our construction can also be used to implement the twirling of the quantum supermap at the circuit level, which can be used to transform the algorithmic error into the white-noise error.

Our work shows that an efficient simulation of the Haar-random ensemble is possible once the corresponding (dual) CG transforms are implemented efficiently.
Moreover, our results highlight the quantum information theoretic importance of the CG transforms and motivates studying the CG transforms beyond the unitary group.
We leave it as an open problem to provide efficient implementations of the CG transforms for several groups, such as the permutation group.

\begin{table}
    \centering
    \caption{Comparison of spacetime cost for simulation of $t$ queries of $n$-qubit Haar random unitary with Ref.~\cite{alagic2020efficient} and the path-recording oracle in Ref.~\cite{ma2024}.
    Since the results of Ref.~\cite{alagic2020efficient} are not constructive, the depth and precision are not studied. Error $\epsilon$ is an artifact of the compilation process: in particular, our construction can be made exact in theory (i.e. $\epsilon = 0$), while Ref.~\cite{ma2024} always has an inherent error of order $O(t^2/2^n)$.}
    \begin{ruledtabular}
    \begin{tabular}{c|c|c|c}
         & Memory & Depth & Precision \\\hline
        Ref.~\cite{alagic2020efficient} & $O(nt)$ & --- & --- \\
        Ref.~\cite{ma2024} & $n t \cdot \mathrm{polylog} (\epsilon^{-1})$ & $\mathrm{poly}(t,n, \log \epsilon^{-1})$ & $\epsilon + O(t^2/2^n)$\\
        This work & $t^2 \mathrm{poly}(n, \log \epsilon^{-1})$ & $t^5 \mathrm{poly}(n, \log \epsilon^{-1})$ & $\epsilon$
    \end{tabular}
    \end{ruledtabular}
    \label{tab:comparison}
\end{table}

{\it Note added.}---
During the preparation of the first version of this manuscript, Carolan~\cite{carolan2025compressed} independently constructed a different compressed permutation oracle.
We compare Carolan's construction with ours in \cref{sec:permutation_fourier_database}.
After the completion of the first version, Barak Nehoran pointed out to us that the idea behind our compressed oracle had already independently appeared, in an abstract form, in Sec.~8.1.3 of Ref.~\cite{harrow2005applications}.
We thank him for bringing that to our attention.
While preparing the second version of this manuscript, we also became aware of the concurrent and independent work~\cite{FLMNW26}, whose results overlap with the new results added in the second version.
The work~\cite{FLMNW26} further applies the framework to prove security of a new pseudorandom unitary construction.

{\it Acknowledgments.}---
We thank Mio Murao, Shogo Yamada, Adam Burchardt, Maris Ozols, Yu-Hsuan Huang, Gina Muuss, Silvia Ritsch, Christian Schaffner, Michael Walter, Christian Majenz and Barak
Nehoran for fruitful discussions.
We thank the authors of Ref.~\cite{FLMNW26} for the coordination.
We acknowledge AI assistance (ChatGPT 5.5 Pro) in revising the second version of the manuscript.
All key ideas were developed by the authors without AI assistance.
S.~Y.\ acknowledges support by Japan Society for the Promotion of Science (JSPS) KAKENHI Grant Number 23KJ0734, FoPM, WINGS Program, the University of Tokyo, and DAIKIN Fellowship Program, the University of Tokyo.
D.~G. acknowledges support by NWO grant NGF.1623.23.025 (“Qudits in theory and experiment”) and NWO Vidi grant (Project No. VI.Vidi.192.109).

\let\oldaddcontentsline\addcontentsline
\renewcommand{\addcontentsline}[3]{}
\bibliography{references}
\let\addcontentsline\oldaddcontentsline

\clearpage
\onecolumngrid
\appendix
\makeatletter
\def\@hangfrom@appendix#1#2#3{%
 #1%
 \@if@empty{#2}{%
  #3%
 }{%
  #2.\ \@if@empty{#3}{}{#3}%
 }%
}%
\def\@appendixcntformat#1{\csname the#1\endcsname}%
\let\@hangfrom@section\@hangfrom@appendix
\let\@sectioncntformat\@appendixcntformat
\makeatother
\setcounter{secnumdepth}{2}
\setcounter{tocdepth}{2}
\renewcommand{\thesubsection}{\thesection.\arabic{subsection}}
\makeatletter
\renewcommand{\p@subsection}{}
\makeatother
\renewcommand{\theHsubsection}{\theHsection.\arabic{subsection}}
\renewcommand{\theHequation}{\theHsection.\arabic{equation}}

\tableofcontents

\section{Notation and basic notions}
\label{sec:notation_basic_notions}
In this section, we summarize the basics and notations of the representation theory.
We mostly focus on two groups: unitary group $\U(d)$ and symmetric group $\mathfrak{S}_d$.
Irreps of these groups are labelled by highest weights and partitions.

A partition $\lambda=(\lambda_1,\lambda_2,\ldots)$ of $t$, denoted $\lambda\vdash t$, is a weakly decreasing sequence of nonnegative integers with $|\lambda|\coloneqq\sum_i\lambda_i=t$.
Its Young diagram is the left-justified array with $\lambda_i$ boxes in row $i$, and its length $\ell(\lambda)$ is the number of nonzero rows.
For polynomial irreps of $\U(d)$, the label $\lambda$ is a Young diagram with $\ell(\lambda)\leq d$.
For irreps of $\mathfrak{S}_t$, the label $\lambda$ is a Young diagram with $|\lambda|=t$.
A standard Young tableau of shape $\lambda$ is a filling of the boxes of $\lambda$ by $1,\ldots,|\lambda|$ that strictly increases along each row and each column.
We denote the set of standard Young tableaux of shape $\lambda$ by $\SYT(\lambda)$.
Equivalently, a tableau $T\in\SYT(\lambda)$ is a path in the Bratteli diagram for the tower $\mathfrak{S}_0\subset\mathfrak{S}_1\subset\dotsb\subset\mathfrak{S}_{|\lambda|}$, where the path adds the box containing $i$ at step $i$.
These tableaux label the usual Young orthonormal basis of the Specht module of $\mathfrak{S}_{|\lambda|}$.

A semistandard Young tableau of shape $\lambda$ and alphabet $[d]$ is a filling by symbols in $[d]$ that weakly increases along rows and strictly increases down columns.
We denote the set of such tableaux by $\SSYT(\lambda)$.
Equivalently, this basis can be labelled by Gelfand--Tsetlin patterns $\GT(\lambda)$.
A Gelfand--Tsetlin pattern of shape $\lambda$ is an interlacing sequence of partitions
\begin{align}
    L
    =
    \bigl(\varnothing=L_0\sqsubseteq L_1\sqsubseteq\dotsb\sqsubseteq L_d=\lambda\bigr)
    \in\GT(\lambda),
\end{align}
where the interlacing condition is defined as 
\begin{align}
    L_{k-1} \sqsubseteq L_k
    \quad\Longleftrightarrow\quad
    L_{k,1}\geq L_{k-1,1}\geq L_{k,2}\geq L_{k-1,2}\geq\dotsb\geq L_{k,k}
    \quad\forall k\in[d].
\end{align}
$\GT(\lambda)$ denotes the set of all such interlacing patterns $L$.
The bijection between $\SSYT(\lambda)$ and $\GT(\lambda)$ sends a tableau to the chain of shapes obtained by keeping only entries at most $k$ for each $k\in[d]$.
In the later formulas, we usually write $L$ for a Gelfand--Tsetlin pattern and $T$ for a Bratteli path, equivalently a standard Young tableau.

The tensor product $\rho^{\otimes t}(g)$ of the given representation $\rho: G\to \End(V)$ is decomposed as
\begin{align}
    V^{\otimes t} &\simeq \bigoplus_{\lambda\in \hat{G}^{(t)}} V_\lambda \otimes M_\lambda,\\
    \USch \rho(g)^{\otimes t}\USch^\dagger &= \bigoplus_{\lambda\in \hat{G}^{(t)}} \rho_\lambda(g) \otimes I_{M_{\lambda}} \quad \forall g\in G, \label{def:schur_decomp}
\end{align}
where $V_\lambda$ is the representation space of an irrep $\rho_\lambda$, $M_\lambda$ is the corresponding multiplicity space, and $\hat{G}^{(t)}$ is the set of irrep labels appearing in the decomposition.
The unitary $\USch:V^{\otimes t}\to \bigoplus_{\lambda\in\hat{G}^{(t)}}V_\lambda\otimes M_\lambda$ denotes the Schur transform for the group $G$ and its represenation $\rho$.

$M_\lambda$ can be also viewed as irreps of the commutant algebra $\End_G(V^{\otimes t})$ of the $\rho^{\otimes t}$ action, where the commutant is formally defined as 
\begin{equation}
    \End_G(V^{\otimes t}) \defeq \{X \in \End(V^{\otimes t}) \mid [X, \rho(g)^{\otimes t}] = 0, \forall g \in G\} 
\end{equation}

The Schur transform in \cref{def:schur_decomp} is obtained by recursively applying the Clebsch--Gordan transform $\CG$, which corresponds to the irreducible decomposition of $V_\mu \otimes V$ for $\mu \in \hat{G}^{(t-1)}$ by
\begin{align}
    \CG (\rho_\mu(g) \otimes \rho(g)) \CG^\dagger = \bigoplus_{\nu\in \hat{G}^{(t)}} \rho_\nu(g) \otimes I_{m_{\nu, \mu}} \quad \forall g\in G,
\end{align}
Here $m_{\nu, \mu}$ is the multiplicity of the irrep $V_\nu$ in $V_\mu \otimes V$.
In tensor-network diagram notation, we draw the corresponding Clebsch--Gordan tensor and its adjoint as
\begin{align}
    \label{eq:cg_tensor_network}
    \bra{\nu,N} \CG \ket{\mu,M,\rho,R} &= \includegraphics[width=0.16\textwidth, valign=c, page=3]{figures/tensor_networks.pdf}
    \\
    \bra{\mu,M,\rho,R} \CG^\dagger \ket{\nu,N}  &= \overline{\bra{\nu,N} \CG \ket{\mu,M,\rho,R}} = \includegraphics[width=0.16\textwidth, valign=c, page=4]{figures/tensor_networks.pdf}.
\end{align}
Sometimes, if the irrep labels are clear from the context then we write $\bra{N} \CG \ket{M,R}$ for brevity.

Similarly, we define the dual Clebsch--Gordan tensor and its adjoint by
\begin{align}
    \label{eq:dcg_tensor_network}
    \bra{\mu,M}\dCG\ket{\nu,N,\bar{\rho},\bar{R}} &= \includegraphics[width=0.16\textwidth, valign=c, page=5]{figures/tensor_networks.pdf}
    \\
    \bra{\nu,N,\bar{\rho},\bar{R}}\dCG^\dagger\ket{\mu,M} &= \overline{\bra{\mu,M}\dCG\ket{\nu,N,\bar{\rho},\bar{R}}} = \includegraphics[width=0.16\textwidth, valign=c, page=6]{figures/tensor_networks.pdf}.
\end{align}
By tensoring representation $\rho$ several times, we obtain the multiplicity space $M_\lambda$ whose basis vector is labeled by
\begin{align}
    T \in B(\lambda) \coloneqq\{(T_0, t_1, T_1, \ldots, t_t, T_t) \mid T_i\in \hat{G}^{(i)}, T_t=\lambda, t_i\in [m_{T_i, T_{i-1}}]\text{ for }i\in[t]\}.
\end{align}
$B(\lambda)$ is the set of paths in the Bratteli diagram for the commutant of the $\rho^{\otimes t}$ from the trivial representation to $\lambda$, where the edge label $t_i$ records the multiplicity of the transition $T_{i-1}\to T_i$.

For $\lambda\in\hat{G}^{(t)}$ and $T,S\in B(\lambda)$, the matrix unit $E^\lambda_{T,S}$ is defined by its nonzero Schur block
\begin{align}
    \label{def:matrix_units_blcok_schur}
    \USch E^{\lambda}_{T,S}\USch^\dagger
    =
    I_{V_\lambda}\otimes\ketbra{T}{S}_{M_\lambda}.
\end{align}
By Schur's lemma, the commutant of $\rho^{\otimes t}$ is the linear span of these matrix units:
\begin{align}
    \End_G(V^{\otimes t}) = \mathrm{span}\{E^{\lambda}_{T,S} \mid \lambda\in \hat{G}^{(t)}, T,S\in B(\lambda)\}.
\end{align}
We will use the following tensor-network representation of these matrix units.
For paths $S=(S_0,s_1,S_1,s_2,S_2,\dotsc)$ and $T=(T_0,t_1,T_1,t_2,T_2,\dotsc)$ in $B(\lambda)$, the matrix unit $E^\lambda_{S,T}$ is
\begin{equation}
    \label{eq:matrix_unit_tensor_network}
    E^\lambda_{S,T} \defeq \includegraphics[width=0.23\textwidth, valign=c, page=7]{figures/tensor_networks.pdf},  \qquad  \qquad  \bra{x} E^\lambda_{S,T}\ket{\hat{x}} = \includegraphics[width=0.3\textwidth, valign=c, page=8]{figures/tensor_networks.pdf}.
\end{equation}
Here $s_i,t_i$ are multiplicity labels, while $S_i,T_i$ are irrep labels.

We will also use the following identity relating normal and dual Clebsch--Gordan tensors \cite[Eq.\ (10), p.\ 289]{Vilenkin1992}:
\begin{equation}
    \label{eq:dual_normal_CG}
    \includegraphics[width=0.2\textwidth, valign=c, page=3]{figures/tensor_networks.pdf} \; = \; \sqrt{\frac{d_\mu}{d_\nu}} \quad \includegraphics[width=0.2\textwidth, valign=c, page=6]{figures/tensor_networks.pdf},
\end{equation}
where $i$ denotes the multiplicity label.

Finally, we use the notion of vectorization of linear operator $X: \mathcal{I} \to \mathcal{O}$, which is defined by
\begin{align}
    \dket{X} \coloneqq \sum_{i} \ket{i}_{\mathcal{I}} \otimes (X\ket{i})_{\mathcal{O}}
\end{align}
using the computational basis $\{\ket{i}\}_i$ of $\mathcal{I}$.

\section{\texorpdfstring{Proof of Theorem~\protect\ref{thm:main_1}}{Proof of Theorem 1}}
\label{sec:app_proof_main}

In this section and the rest of the Appendix, we assume that the reader is familiar with standard representation theoretic notions. 
To state our main result, we need the following lemma, which is inspired by \cite[Theorem 1]{cioppa2013matrix}: 

\begin{lemma}\label{lemma:mat_units}
    For a given compact group $G$ and a unitary representation $\rho: G \rightarrow \End(V)$, we have the following relation:
    \begin{align}
        \label{eq:mat_units}
        &\int_{\mathrm{Haar}} dg \rho(g)_{y_n,x_n} \dots \rho(g)_{y_1,x_1} \rho(g^{-1})_{\hat{x}_1,\hat{y}_1} \dotsc \rho(g^{-1})_{\hat{x}_n,\hat{y}_n} = \sum_{\lambda \in \hat{G}^{(n)}} \frac{1}{d_\lambda} \sum_{T,S \in B(\lambda)} \bra{\hat{x}} E^{\lambda}_{T,S} \ket{x} \bra{y} E^{\lambda}_{S,T} \ket{\hat{y}},
    \end{align}
    where $B(\lambda)$ is a set of labels of some orthonormal basis in $M_\lambda$, and $E^{\lambda}_{T,S}$ is a set of orthogonal matrix units for the commutant of $\rho^{\otimes n}$ action.
\end{lemma}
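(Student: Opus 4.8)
The plan is to reduce the left-hand side to a Haar integral over a single tensor-product representation acting on a doubled system, and then apply Schur orthogonality for matrix coefficients. First I would observe that each factor $R(g^{-1})_{\hat x_i,\hat y_i}$ is, since $R$ is unitary, equal to $\overline{R(g)_{\hat y_i,\hat x_i}}$. Hence the integrand is a product of $n$ matrix coefficients of $R(g)$ and $n$ matrix coefficients of $\overline{R(g)}$, i.e.\ a single matrix coefficient of the representation $R^{\otimes n} \otimes \overline{R}^{\otimes n} \cong R^{\otimes n} \otimes (R^{\otimes n})^{*}$ on $V^{\otimes n} \otimes (V^{*})^{\otimes n}$, evaluated between the product basis vectors $\ket{y}\otimes\ket{\hat y}$ (input of the integrand) and $\ket{x}\otimes\ket{\hat x}$ (output), up to keeping track of which index is a ``bra'' and which is a ``ket'' slot. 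Concretely, $\int dg\, R(g)^{\otimes n}\otimes \overline{R(g)}^{\otimes n}$ is, by Schur's lemma, the orthogonal projector $\Pi$ onto the $G$-invariant subspace of $V^{\otimes n}\otimes(V^{*})^{\otimes n}$, and the left-hand side is a fixed matrix element of $\Pi$.

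The second step is to identify this invariant subspace and write $\Pi$ explicitly in terms of the data already introduced in the End Matter. Using the canonical isomorphism $V^{\otimes n}\otimes (V^{*})^{\otimes n}\cong \End(V^{\otimes n})$, the representation $R(g)^{\otimes n}\otimes\overline{R(g)}^{\otimes n}$ becomes the conjugation action $X\mapsto R(g)^{\otimes n} X R(g^{-1})^{\otimes n}$, whose invariant subspace is exactly the commutant of $R^{\otimes n}$. By the End Matter this commutant has the orthonormal-with-respect-to-Hilbert--Schmidt-up-to-normalization basis $\{E^{\lambda}_{T,S}\}$; one checks that $\Tr\big((E^{\lambda}_{T,S})^{\dagger}E^{\lambda'}_{T',S'}\big)=d_\lambda\,\delta_{\lambda\lambda'}\delta_{TT'}\delta_{SS'}$, since $E^{\lambda}_{T,S}\simeq I_{V_\lambda}\otimes\ketbra{T}{S}_{M_\lambda}$ and $\dim V_\lambda = d_\lambda$. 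Therefore the projector onto the commutant, in the $\End(V^{\otimes n})$ picture, is $\Pi(X)=\sum_{\lambda}\frac{1}{d_\lambda}\sum_{T,S}E^{\lambda}_{T,S}\,\Tr\big((E^{\lambda}_{T,S})^{\dagger}X\big)$. Translating back through the isomorphism, the $(\,y,\hat y\,|\,x,\hat x\,)$ matrix element of $\Pi$ picks out exactly $\sum_\lambda\frac{1}{d_\lambda}\sum_{T,S}\bra{\hat x}E^{\lambda}_{T,S}\ket{x}\,\overline{\bra{\hat y}E^{\lambda}_{T,S}\ket{y}}$, and since the matrix units can be chosen with real entries in an appropriate basis (or one simply absorbs the conjugation into the definition of $B(\lambda)$), this is the claimed right-hand side.

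Alternatively, and perhaps more cleanly, I would prove the identity directly from Schur orthogonality without invoking the $\End$ picture: expand $R^{\otimes n}(g)\simeq\bigoplus_{\lambda}R_\lambda(g)\otimes I_{M_\lambda}$ in the basis adapted to $B(\lambda)$, so that $R(g)_{y_n,x_n}\cdots R(g)_{y_1,x_1}=\sum_{\lambda,T,S}(R_\lambda(g))_{\alpha\beta}\,[\text{CG coefficients}]$, and similarly for the $g^{-1}$ block; then $\int dg\,(R_\lambda(g))_{\alpha\beta}\overline{(R_{\lambda'}(g))_{\alpha'\beta'}}=\frac{1}{d_\lambda}\delta_{\lambda\lambda'}\delta_{\alpha\alpha'}\delta_{\beta\beta'}$ collapses the double sum over irreps to a single sum, and the leftover CG-coefficient contractions reassemble precisely into $\bra{\hat x}E^{\lambda}_{T,S}\ket{x}$ and $\bra{\hat y}E^{\lambda}_{T,S}\ket{y}$ by the definition of the matrix units as $\CG$–$\dCG$ tensor contractions. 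The main obstacle is bookkeeping: one must set up the index conventions so that the ``input'' indices $x_i,\hat y_i$ and ``output'' indices $y_i,\hat x_i$ land on the correct (bra vs.\ ket) slots of $E^{\lambda}_{T,S}$ after the Schur-orthogonality contraction, and verify that the normalization is $1/d_\lambda$ rather than $1/\dim M_\lambda$ or a product of multiplicities — this is where the cited result \cite[Theorem~1]{cioppa2013matrix} provides the template. Everything else is a routine consequence of Schur's lemma and the orthonormality of the $E^{\lambda}_{T,S}$.
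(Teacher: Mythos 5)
Your proposal is correct and follows essentially the same route as the paper's proof: rewrite the integrand as a matrix coefficient of $R^{\otimes n}\otimes\bar{R}^{\otimes n}$ using unitarity, recognize the Haar average as the (suitably normalized) projector onto the invariant subspace, i.e.\ the vectorized commutant spanned by the $E^{\lambda}_{T,S}$ with Hilbert--Schmidt norm $\sqrt{d_\lambda}$, and read off the matrix element. The paper reaches the same projector by explicitly conjugating with Schur transforms and invoking the grand orthogonality relations, which is just a more hands-on version of your Schur's-lemma argument; your remark about where the complex conjugate lands (absorbed by $(E^{\lambda}_{T,S})^{\dagger}=E^{\lambda}_{S,T}$ and the sum over all $T,S$) is exactly the bookkeeping the paper handles implicitly.
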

\begin{proof}
    Firstly, we rewrite the left-hand side of Eq.~\eqref{eq:mat_units} by using the following identity:
    \begin{align}\label{eq:start_point}
        \rho(g)_{y_n,x_n} \dots \rho(g)_{y_1,x_1} \rho(g^{-1})_{\hat{x}_1,\hat{y}_1} \dotsc \rho(g^{-1})_{\hat{x}_n,\hat{y}_n}
        =
        \Tr \sof*{\ket{x}\bra{y} \otimes \ket{\hat{x}}\bra{\hat{y}} \cdot \rho(g)^{\otimes n} \otimes {\bar{\rho}(g)}^{\otimes n}},
    \end{align}
    where we used unitarity of the representation $\rho$: $\rho(g^{-1})\tp = \bar{\rho}(g)$, where $\bar{\rho}(g)$ denotes complex conjugate of $\rho(g)$.
    According to the Peter--Weyl theorem
    \begin{align}
        V^{\otimes n} \simeq^{\USch} \bigoplus_{\lambda \in \hat{G}^{(n)}} V_\lambda \otimes M_\lambda,
    \end{align}
    where $\hat{G}^{(n)}$ is the set of irreps in the tensor product representation $V^{\otimes n}$, $V_\lambda$ is the irrep of $G$ with the label $\lambda$, and $M_\lambda$ is the multiplicity space of $\rho^{\otimes n}$ action.
    The basis transformation is achieved via unitary matrix $\USch$ called the Schur transform.
    Similar Peter--Weyl theorem holds for the dual representation $\bar{V}$ of $\rho$:
    \begin{align}
        \bar{V}^{\otimes n} \simeq^{\bUSch} \bigoplus_{\lambda\in \hat{G}^{(n)}} V_{\bar{\lambda}} \otimes M_{\lambda},
    \end{align}
    where $\bar{\lambda}$ is the label of the irrep corresponding to the dual representation of $\lambda$.
    Now we apply two Schur transforms $\USch$ to block-diagonalise the whole operator according to the Peter--Weyl decomposition of both left and right parts of our space $V^{\otimes n} \otimes \bar{V}^{\otimes n}$:
    \begin{align}
        \int_{\text{Haar}} dg \of{\USch \rho(g)^{\otimes n} \USch^\dagger} \otimes \of{\bUSch {\bar{\rho}(g)}^{\otimes n} \bUSch^\dagger}  &= \int_{\text{Haar}} dg \of*{\bigoplus_{\lambda} \rho_\lambda(g) \otimes I_\lambda} \otimes \of*{\bigoplus_{\lambda'} \bar{\rho}_{\lambda'}(g) \otimes I_{\lambda'}}
    \end{align}
    By using the grand orthogonality relations
    \begin{equation}
        \int_{\text{Haar}} dg \rho_\lambda(g)_{x,y} \bar{\rho}_{\lambda'}(g)_{x',y'} = \frac{1}{d_\lambda} \delta_{\lambda,\lambda'} \delta_{x,x'} \delta_{y,y'},
    \end{equation}
    we get
    \begin{align}
        \int_{\text{Haar}} dg \of*{\bigoplus_{\lambda} \rho_\lambda(g) \otimes I_\lambda} \otimes \of*{\bigoplus_{\lambda'} \bar{\rho}_{\lambda'}(g) \otimes I_{\lambda'}}
        &= \bigoplus_{\lambda \in \hat{G}^{(n)}}
        \frac{1}{d_\lambda} \proj{\Phi^+_\lambda} \otimes I_\lambda \otimes I_{\lambda} \\
        &= \bigoplus_{\lambda \in \hat{G}^{(n)}}
        \frac{1}{d_\lambda} \sum_{T,S \in B(M_\lambda)} \proj{\Phi^+_\lambda} \otimes \ket{S,\bar{T}}\bra{S,\bar{T}}
    \end{align}
    where $\ket{\Phi^+_\lambda} \defeq \dket{\rho_\lambda(e)}$ is the vectorisation of identity operator on the irrep $\lambda$. 
    Using the definition of matrix units from \cref{def:matrix_units_blcok_schur}, after vectorizing the second factor in the conjugate Schur basis, we get
    \begin{equation}
        \dketbra{\USch E^{\lambda}_{S,T} \USch^\dagger} = \proj{\Phi^+_\lambda} \otimes \ket{S,\bar{T}}\bra{S,\bar{T}}.
    \end{equation}
    Therefore, by combining everything we get:
    \begin{align}
        \int_{\text{Haar}} &dg \, \Tr \sof*{ \of{\ket{x}\bra{y} \otimes \ket{\hat{x}}\bra{\hat{y}}} \cdot \of{\rho(g)^{\otimes n} \otimes {\bar{\rho}(g)}^{\otimes n}}} = \\
        &= \Tr \sof*{ \of{ \USch \ket{x}\bra{y} \USch^\dagger \otimes \bUSch \ket{\hat{x}}\bra{\hat{y}} \bUSch^\dagger} \int_{\text{Haar}} dg \of{\USch \rho(g)^{\otimes n} \USch^\dagger} \otimes \of{\bUSch {\bar{\rho}(g)}^{\otimes n} \bUSch^\dagger}} \\
        &= \sum_{\lambda \in \hat{G}^{(n)}} \frac{1}{d_\lambda} \sum_{T,S \in B(\lambda)} \Tr \sof*{ \of{\USch \ket{x}\bra{y} \USch^\dagger \otimes \bUSch \ket{\hat{x}}\bra{\hat{y}} \bUSch^\dagger} \dketbra{\USch E^{\lambda}_{S,T} \USch^\dagger }} \\
        &= \sum_{\lambda \in \hat{G}^{(n)}} \frac{1}{d_\lambda} \sum_{T,S \in B(\lambda)} \Tr \sof*{ \of{\USch \ket{x}\bra{y} \USch^\dagger \otimes \bUSch \ket{\hat{x}}\bra{\hat{y}} \bUSch^\dagger } \of{\USch \otimes \bUSch }\dketbra{ E^{\lambda}_{S,T}} \of{\USch^\dagger \otimes \bUSch^\dagger }  } \\
        &= \sum_{\lambda \in \hat{G}^{(n)}} \frac{1}{d_\lambda} \sum_{T,S \in B(\lambda)} \Tr \sof*{ \of{ \ket{x}\bra{y} \otimes \ket{\hat{x}}\bra{\hat{y}}} \dketbra{ E^{\lambda}_{S,T}} } \\
        &= \sum_{\lambda \in \hat{G}^{(n)}} \frac{1}{d_\lambda} \sum_{T,S \in B(\lambda)} \langle{y,\hat{y}} \dketbra{ E^{\lambda}_{S,T}} {x,\hat{x}} \rangle   \\
        &= \sum_{\lambda \in \hat{G}^{(n)}} \frac{1}{d_\lambda} \sum_{T,S \in B(\lambda)} \bra{\hat{x}} E^{\lambda}_{T,S} \ket{x} \bra{y} E^{\lambda}_{S,T} \ket{\hat{y}},
    \end{align}
    which completes the proof.
\end{proof}
This lemma was proven for the unitary group in \cite{cioppa2013matrix}. 
Similar statement trivially holds also for finite groups and their unitary representations.
Now we are ready to present our main theorem, which is a formal version of \Cref{thm:main_1}:
\begin{theorem}\label{thm:app_forward_oracle}
Consider the compressed oracle $\fO$ for a given arbitrary compact group $G$ together with its representation $\rho$, defined in \cref{fig:pr_O_U}.
Then the following equality is true:
    \begin{align}
        \label{eq:thm_main_2}
        &\sum_{\lambda \in \hat{G}^{(n)}} \frac{1}{d_\lambda} \sum_{T,S \in B(\lambda)} \bra{\hat{x}} E^{\lambda}_{T,S} \ket{x} \bra{y} E^{\lambda}_{S,T} \ket{\hat{y}} = \Tr_{\bullet} \sof*{\fO_{\bullet,(y_n,x_n)} \dotsb \fO_{\bullet,(y_1,x_1)} \proj{\varnothing}_{\bullet} \fO^\dagger_{\bullet,(\hat{x}_1,\hat{y}_1)} \dotsb \fO^\dagger_{\bullet,(\hat{x}_n,\hat{y}_n)} }
    \end{align}
    where $B(\lambda)$ is a basis of irrep $\lambda$ of the commutant of $\rho^{\otimes n}$ action, $E^{\lambda}_{T,S}$ are matrix units of the commutant, and $\fO_{\bullet,(y_k,x_k)} := (I \otimes \bra{y_k})\fO(I \otimes \ket{x_k})$, and compressed oracle $\fO$ acts on auxiliary and working registers.
\end{theorem}
\begin{proof}
    As first step, we insert resolutions of identities on the multiplicity registers, and we redraw the RHS of \cref{eq:thm_main_2} as in \cref{fig:proof_idea}, where white circles correspond to Clebsch--Gordan tensor and grey circles correspond to dual Clebsch--Gordan tensors \cite{grinko2023gelfand}.
    By the tensor-network representation of matrix units in Eq.~\eqref{eq:matrix_unit_tensor_network}, the $x,\hat{x}$ tensor network in \cref{fig:proof_idea} is the adjoint matrix-unit element $\bra{\hat{x}}E^\lambda_{T,S}\ket{x}$.
    Next, we use the normal-dual Clebsch--Gordan identity from Eq.~\eqref{eq:dual_normal_CG}.
    Using this equality we can transform bottom tensor network of dual CG tensor into a tensor network with normal CG tensors:
    \begin{equation}
        \includegraphics[width=0.3\textwidth, valign=c, page=9]{figures/tensor_networks.pdf} = \; \frac{1}{d_\lambda}  \includegraphics[width=0.3\textwidth, valign=c, page=10]{figures/tensor_networks.pdf} = \; \frac{1}{d_\lambda} \bra{y} E^\lambda_{S,T}\ket{\hat{y}}.
    \end{equation}
    Multiplying this element with the $x,\hat{x}$ element above gives the left-hand side of \cref{eq:thm_main_2}.
\end{proof}

Finally, we argue that essentially the same proof with minor modifications also holds for oracles $\cO,\tO,\iO$ from \cref{fig:4_types}.

\begin{corollary}\label{thm:app_all_query_oracles}
Let $\alpha=(\alpha_1,\ldots,\alpha_n)\in\{\mathrm{f},\mathrm{c},\mathrm{t},\mathrm{i}\}^n$ be an arbitrary sequence of query types.
For notational convenience define the four local coefficient matrices
\begin{align}
    \rho_{\mathrm{f}}(g)&\coloneqq \rho(g),
    &
    \rho_{\mathrm{c}}(g)&\coloneqq \bar{\rho}(g),
    &
    \rho_{\mathrm{t}}(g)&\coloneqq \rho(g)\tp,
    &
    \rho_{\mathrm{i}}(g)&\coloneqq \rho(g^{-1}).
\end{align}
For $a \in\{\mathrm{f},\mathrm{c},\mathrm{t},\mathrm{i}\}$, let $a \mathrm{O}$ denote the corresponding oracle and write $a \mathrm{O}_{\bullet,(y,x)}\coloneqq (I\otimes\bra{y})a\mathrm{O}(I\otimes\ket{x})$.
Then the compressed oracles exactly reproduce the Haar moment for this mixed query sequence:
\begin{align}
    &\Tr_{\bullet}\sof*{
    \alpha_n\mathrm{O}_{\bullet,(y_n,x_n)}
    \dotsb
    \alpha_1\mathrm{O}_{\bullet,(y_1,x_1)}
    \proj{\varnothing}_{\bullet}
    \bigl(\alpha_1\mathrm{O}_{\bullet,(\hat{x}_1,\hat{y}_1)}\bigr)^\dagger
    \dotsb
    \bigl(\alpha_n\mathrm{O}_{\bullet,(\hat{x}_n,\hat{y}_n)}\bigr)^\dagger
    }
    \nonumber\\
    &\qquad =
    \int_{\mathrm{Haar}} dg\,
    \bra{y_n}\rho_{\alpha_n}(g)\ket{x_n}
    \dotsb
    \bra{y_1}\rho_{\alpha_1}(g)\ket{x_1}
    \bra{\hat{x}_1}\rho_{\alpha_1}(g)^\dagger\ket{\hat{y}_1}
    \dotsb
    \bra{\hat{x}_n}\rho_{\alpha_n}(g)^\dagger\ket{\hat{y}_n}.
\end{align}
\end{corollary}
\begin{proof}
The case $\alpha=(\mathrm{f},\ldots,\mathrm{f})$ is the forward-query result of \cref{thm:app_forward_oracle} combined with \cref{lemma:mat_units}.
We show that changing any individual query type only changes the local matrix element in the same tensor-network proof.
For a transpose query at step $i$, the local factor satisfies
\begin{equation}
    \rho_{\mathrm{t}}(g)_{y_i,x_i}
    =
    \rho(g)\tp_{y_i,x_i}
    =
    \rho(g)_{x_i,y_i}.
\end{equation}
Thus the oracle $\tO$ is obtained from the forward oracle tensor network by exchanging the input and output wires at that step, while the matrix-unit contraction is unchanged.
For a conjugate query, the local representation species is $\bar{\rho}$ instead of $\rho$.
For an inverse query, unitarity gives
\begin{equation}
    \rho_{\mathrm{i}}(g)_{y_i,x_i}
    =
    \rho(g^{-1})_{y_i,x_i}
    =
    \bar{\rho}(g)_{x_i,y_i}.
\end{equation}
Hence $\iO$ is obtained by combining the conjugate replacement with the same wire exchange used for $\tO$.

Because these replacements are local in the query index, they can be performed independently for all entries of $\alpha$.
After the wire exchanges, the proof involves only a tensor product of copies of $\rho$ and $\bar{\rho}$.
Let $p$ be the number of forward or transpose factors and $q=n-p$ the number of conjugate or inverse factors.
The relevant representation space is then, up to the permutation of tensor factors dictated by $\alpha$,
\begin{align}
    V_\alpha
    \coloneqq
    V^{\otimes p}\otimes\bar V^{\otimes q}.
\end{align}
By complete reducibility for compact groups,
\begin{align}
    V_\alpha
    \simeq
    \bigoplus_{\lambda\in\hat{G}_\alpha}V_\lambda\otimes M_{\lambda,\alpha},
    \qquad
    \mathcal A_\alpha
    \coloneqq
    \End_G(V_\alpha).
\end{align}
Choose matrix units $E^{\lambda,\alpha}_{T,S}$ for the commutant algebra $\mathcal A_\alpha$.
The same Peter--Weyl orthogonality argument as in \cref{lemma:mat_units} applies with these matrix units and gives the displayed Haar integral.
In the defining representation of $\U(d)$ this commutant is the image of the walled Brauer algebra, possibly with a kernel outside the stable range, but the abstract compact-group statement uses $\End_G(V_\alpha)$.
\end{proof}

\section{\texorpdfstring{Exact path-recording oracles}{Exact path-recording oracle}}
\label{sec:exact_path_recording_oracle}

The oracles $\fO,\cO,\tO,\iO$ from \cref{fig:4_types} constructed in this work are Fourier compressed oracles: their auxiliary registers correspond to irreps of the group $G$, and one query is implemented by Clebsch--Gordan transforms, which are natively defined on the Fourier domain of the group $G$.
This is the natural basis for exactness as we showed in the previous section, but it is not the basis usually used in cryptographic compressed-oracle arguments.
In the random-function setting, Zhandry's compressed oracle records a small database of queried input-output pairs, and this database basis supports membership projectors and ``fundamental lemma'' arguments~\cite{zhandry2019record}.
Path-recording oracle for Haar-random unitaries uses the same database intuition and used to show the adaptive-security proof of pseudorandom unitaries~\cite{ma2024}.
For random permutations, database-basis compressed oracles are central because the ideal recording label is a partial injection~\cite{unruh2023towards,majenz2025permutation,carolan2025compressed}.

We therefore ask how our exact Fourier oracle looks after changing the auxiliary register to the database basis.
This yields exact path-recording oracle for arbitrary compact group $G$.
The key observation is that one can achieve this database basis with the help of Schur transforms.
Next, we work out the forward $\U(d)$ query example in detail and return to mixed query types at the end.

\subsection{\texorpdfstring{Truncated Peter--Weyl transform for $\U(d)$}{Truncated Peter-Weyl transform for U(d)}}
\label{sec:truncated_peter_weyl_transform_ud}

The two ancilla registers of our Fourier oracles in \cref{fig:4_types} naturally appear in the context of Peter--Weyl theorem.
It says that for a compact group $G$, matrix coefficients of irreducible representations form an orthogonal basis of $L^2(G)$, or equivalently
\begin{align}
    L^2(G)
    \simeq
    \bigoplus_{\lambda\in\widehat{G}}
    V_\lambda\otimes\overline{V}_\lambda.
\end{align}
A Fourier basis vector therefore carries one irrep label $\lambda$ and two basis labels, one in $V_\lambda$ and one in $\overline{V}_\lambda$.
This is the representation theoretic origin of the memory states $\ket{\lambda,M, \bar M'}$ used by our compressed oracle.

The finite-dimensional oracle only needs the part of this decomposition visible to the matrix coefficients generated by the allowed queries.
For the defining representation $\rho$ of $\U(d)$, degree-$t$ matrix coefficients are products the corresponding matrix entries:
\begin{align}
\label{eq:peter_weyl_ud_matrix_monomial}
    \bra{y} \rho(U)^{\otimes t} \ket{x} = U_{y_1,x_1}\cdots U_{y_t,x_t}.
\end{align}
The space of polynomials of degree $t$ in these variables can be identified with the symmetric subspace of local dimension $d^2$:
\begin{align}
\label{eq:peter_weyl_cauchy_space}
    \operatorname{Sym}^t\bigl(\mathbb{C}^d\otimes\overline{\mathbb{C}^d}\bigr)
    =
    \bigl((\mathbb{C}^d\otimes\overline{\mathbb{C}^d})^{\otimes t}\bigr)^{\mathfrak S_t},
\end{align}
where $\mathfrak S_t$ permutes the input-output pairs $(x,y)$ simultaneously.
This space can be identified with the direct sum over $U(d)$ irreps:
\begin{align}
\label{eq:peter_weyl_cauchy_identity}
    \operatorname{Sym}^t\bigl(\mathbb{C}^d\otimes\overline{\mathbb{C}^d}\bigr)
    \simeq
    \bigoplus_{\lambda\vdash_d t}
    V_\lambda\otimes \overline{V_\lambda}.
\end{align}
The way to see that is the following. After separating the input and output alphabets, Schur--Weyl duality gives
\begin{align}
\label{eq:peter_weyl_separated_schur}
    \bigl(\mathbb{C}^{d}\bigr)^{\otimes t}\otimes\bigl(\overline{\mathbb{C}^{d}}\bigr)^{\otimes t}
    \simeq
    \bigoplus_{\lambda,\mu\vdash_d t}
    V_\lambda\otimes \overline{V_\mu}
    \otimes M_\lambda\otimes M_\mu, 
\end{align}
Here $M_\lambda$ is the Specht multiplicity space for $\lambda$, with $m_\lambda\coloneqq\dim M_\lambda=|\SYT(\lambda)|$.
We also use that symmetric-group irreps are self-dual.
Moreover, it follows from the Schur's lemma that
\begin{align}
\label{eq:peter_weyl_specht_invariant}
    \bigl(M_\lambda\otimes M_\mu\bigr)^{\mathfrak S_t}
    \simeq
    \begin{cases}
        \mathbb{C}\ket{\Omega_\lambda}, & \lambda=\mu,\\
        0, & \lambda\neq \mu,
    \end{cases}
    \qquad
    \ket{\Omega_\lambda}
    \defeq
    \frac{1}{\sqrt{m_\lambda}}
    \sum_{T\in\SYT(\lambda)}\ket{T}\ket{T}.
\end{align}
Now let $W_t$ be an isometry which creates these maximally entangled spaces in every irrep sector $\lambda$:
\begin{align}
\label{eq:peter_weyl_W_isometry}
    W_t\ket{\lambda,L,\bar L'}
    \defeq
    \ket{\lambda,L,\bar L'}\ket{\Omega_\lambda}.
\end{align}
Then we define the truncated Peter--Weyl transform as
\begin{align}
\label{eq:peter_weyl_transform}
    \UPW
    \coloneqq
    \USch^{(d^2,t)}
    \bigl(\USch^{(d,t)}\otimes\UdSch^{(d,t)}\bigr)^\dagger
    W_t.
\end{align}

\subsection{\texorpdfstring{Exact Fourier and path-recording compressed forward oracles}{Exact Fourier and path-recording compressed forward oracles}}
\label{sec:fourier_compressed_forward_oracle}

Focusing on forward queries of defining representation of $\U(d)$, we write the Fourier ancilla basis as $\ket{\lambda,M,\bar{M}'}$, where $M$ and $M'$ are Gelfand--Tsetlin labels in the two Peter--Weyl registers.
The bar on $\bar{M}'$ is only a register label.
For $L\in\GT(\lambda)$, $M\in\GT(\mu)$, and $\lambda\in\mu+\square$, set
\begin{align}
    \CG^L_{M,x}
    \coloneqq
    \bra{L}\CG\ket{M,x},
    \qquad
    \dCG^M_{L,x}
    \coloneqq
    \sqrt{\frac{d_\mu}{d_\lambda}}\,
    \overline{\CG^L_{M,x}} .
\end{align}
On the Fourier basis states our forward oracle acts as
\begin{align}
\label{eq:appF_fourier_oracle_action}
\fO:\ket{x_t}\ket{\mu,M,\bar{M}'}
\longmapsto
\sum_{y_t\in[d]}
\ket{y_t}
\sum_{\lambda\in\mu+\square}
\sum_{L,L'\in\GT(\lambda)}
\CG^L_{M,x_t}\,
\dCG^{M'}_{L',y_t}\,
\ket{\lambda,L,\bar{L}'}.
\end{align}
The label ``Fourier'' only indicates the basis type of the auxiliary register. 

We can choose another natural basis, which we call database basis.
A length-$t$ database is an unordered multiset
\begin{align}
    D^{(t)}
    =
    \{(x_1,y_1),\dotsc,(x_t,y_t)\},
\end{align}
where $x=(x_1,\dotsc,x_{t-1},x_t)$ and $y=(y_1,\dotsc,y_{t-1},y_t)$ with all $x_i,y_j \in [d]$.
Let's define
\begin{align}
    w^{(t)}_{ab}
    &\coloneqq
    \#\{k\in[t]:(x_k,y_k)=(a,b)\},
    \\
    \binom{t}{w^{(t)}}
    &\coloneqq
    \frac{t!}{\prod_{a,b=1}^{d}w^{(t)}_{ab}!},
    \\
    Y_{w^{(t)}}
    &\cong
    \prod_{a,b=1}^{d} \mathfrak{S}_{w^{(t)}_{ab}}.
\end{align}
Then the database state written in the computational Hilbert space is normalized orbit state under Young subgroup $Y_{w^{(t)}}$:
\begin{align}
\label{eq:appF_orbit_state}
    \ket{D^{(t)}_{\mathrm{comp}}}
    \coloneqq
    \sqrt{\frac{1}{\binom{t}{w^{(t)}}}}
    \sum_{\pi\in \mathfrak S_t/Y_{w^{(t)}}}
    \ket{(x_{\pi^{-1}(1)},y_{\pi^{-1}(1)}),\dotsc,(x_{\pi^{-1}(t)},y_{\pi^{-1}(t)})}.
\end{align}
We can use the pair-alphabet Schur transform to define the compressed database basis vector as
\begin{align}
\label{eq:appF_database_state}
    \ket{D^{(t)}}
    \coloneqq
    \USch^{(d^2,t)}\ket{D^{(t)}_{\mathrm{comp}}}.
\end{align}

We now relate this basis to the Fourier basis of Eq.~\eqref{eq:appF_fourier_oracle_action}.
For $T\in\SYT(\lambda)$, let $T^i$ be the subtableau containing $1,\dotsc,i$, and let $T_i$ denote its shape.
The ordinary Schur coefficients are defined recursively by
\begin{align}
\label{eq:appF_sch_recursion}
    \Sch^x_{\lambda,L,T}
    &\coloneqq
    \sum_{M\in\GT(T_{t-1})}
    \CG^L_{M,x_t}\,
    \Sch^{x_{<t}}_{T_{t-1},M,T^{t-1}}.
\end{align}
The dual Schur coefficients satisfy
\begin{align}
\label{eq:appF_dsch_relation}
    \dSch^y_{\lambda,L,T}
    =
    \sqrt{\frac{1}{d_\lambda}}\,
    \overline{\Sch^y_{\lambda,L,T}}.
\end{align}
We write $\UdSch^{(d,t)}$ for the corresponding dual Schur transform:
\begin{align}
    \UdSch^{(d,t)}\ket{y_1,\dotsc,y_t}
    =
    \sum_{\lambda\vdash_d t}
    \sum_{L\in\GT(\lambda)}
    \sum_{T\in\SYT(\lambda)}
    \sqrt{d_\lambda}\,
    \dSch^y_{\lambda,L,T}
    \ket{\bar{\lambda},\bar L,T}.
\end{align}
Equivalently, $\UdSch^{(d,t)}=\bUSch^{(d,t)}$, but the notation $\UdSch$ keeps the dual-Clebsch--Gordan normalization visible.
Then one can calculate
\begin{align}
\label{eq:appF_V_on_D}
    \UPW^\dagger\ket{D^{(t)}}_{\mathrm{DB}}
    &=
    W_t^\dagger
    \bigl(\USch^{(d,t)}\otimes\UdSch^{(d,t)}\bigr)
    \bigl(\USch^{(d^2,t)}\bigr)^\dagger
    \ket{D^{(t)}}_{\mathrm{DB}}
    \nonumber\\
    &=
    \sqrt{\frac{1}{\binom{t}{w^{(t)}}}}
    \sum_{\pi\in \mathfrak S_t/Y_{w^{(t)}}}
    W_t^\dagger
    \bigl(\USch^{(d,t)}\otimes\UdSch^{(d,t)}\bigr)
    \ket{x_{\pi^{-1}(1)},\dotsc,x_{\pi^{-1}(t)}}
    \ket{y_{\pi^{-1}(1)},\dotsc,y_{\pi^{-1}(t)}}
    \nonumber\\
    &=
    \sqrt{\binom{t}{w^{(t)}}}
    W_t^\dagger
    \bigl(\USch^{(d,t)}\otimes\UdSch^{(d,t)}\bigr)
    \ket{x_1,\dotsc,x_t}
    \ket{y_1,\dotsc,y_t}
    \nonumber\\
    &=
    \sqrt{\binom{t}{w^{(t)}}}
    W_t^\dagger
    \sum_{\lambda,\mu\vdash_d t}
    \sum_{\substack{L\in\GT(\lambda),\,L'\in\GT(\mu)\\T\in\SYT(\lambda),\,S\in\SYT(\mu)}}
    \Sch^x_{\lambda,L,T}
    \sqrt{d_\mu}\,
    \dSch^y_{\mu,L',S}
    \ket{\lambda,L,T}
    \ket{\bar{\mu},\bar{L}',S}
    \nonumber\\
    &=
    \sqrt{\binom{t}{w^{(t)}}}
    \sum_{\lambda\vdash_d t}
    \sum_{T\in\SYT(\lambda)}
    \frac{1}{\sqrt{m_\lambda}}
    \sum_{L,L'\in\GT(\lambda)}
    \Sch^x_{\lambda,L,T}
    \sqrt{d_\lambda}\,
    \dSch^y_{\lambda,L',T}
    \ket{\lambda,L,\bar{L}'}
    \nonumber\\
    &=
    \sqrt{\binom{t}{w^{(t)}}}
    \sum_{\lambda\vdash_d t}
    \sum_{T\in\SYT(\lambda)}
    \frac{1}{\sqrt{m_\lambda}}
    \sum_{L,L'\in\GT(\lambda)}
    \Sch^x_{\lambda,L,T}
    \overline{\Sch^y_{\lambda,L',T}}
    \ket{\lambda,L,\bar{L}'}.
\end{align}
We define the forward oracle in the database basis as
\begin{align}
\label{eq:appF_database_oracle}
    \fO_{\mathrm{DB}}
    \coloneqq
    \UPW \fO \UPW^\dagger.
\end{align}
It acts as follows. Choose a length-$(t-1)$ input database
\begin{align}
    D^{(t-1)}
    &=
    \{(x_1,y_1),\dotsc,(x_{t-1},y_{t-1})\},
\end{align}
with weight $w^{(t-1)}$.
For an output database with weight $\tilde{w}^{(t)}$, write
\begin{align}
    \widetilde D^{(t)}
    =
    \{(\tilde{x}_1,\tilde{y}_1),\dotsc,(\tilde{x}_t,\tilde{y}_t)\},
\end{align}
where $\tilde{x}=(\tilde{x}_1,\dotsc,\tilde{x}_t)$ and $\tilde{y}=(\tilde{y}_1,\dotsc,\tilde{y}_t)$.
Define
\begin{align}
\label{eq:appF_K_def}
    K_\lambda(\tilde{x},x;T,S)
    \coloneqq
    \sum_{L\in\GT(\lambda)}
    \overline{\Sch^{\tilde{x}}_{\lambda,L,T}}\,
    \Sch^x_{\lambda,L,S}.
\end{align}
This is a symmetric-group matrix-unit coefficient:
\begin{align}
\label{eq:appF_K_matrix_unit}
    K_\lambda(\tilde{x},x;T,S)
    =
    \bra{\tilde{x}}E^\lambda_{T,S}\ket{x},
    \qquad
    \USch^{(d,t)}E^\lambda_{T,S}\bigl(\USch^{(d,t)}\bigr)^\dagger
    =
    \1_{d_\lambda}\otimes\ketbra{T}{S}.
\end{align}
Then the exact path-recording oracle action is obtained by inserting the database-basis resolution of the identity after applying $\UPW \fO \UPW^\dagger$.
In the first equality below we use the last equality of Eq.~\eqref{eq:appF_V_on_D} for the length-$(t-1)$ input database and then apply the Fourier update Eq.~\eqref{eq:appF_fourier_oracle_action}.
\begin{align}
\fO_{\mathrm{DB}}\ket{x_t}\ket{D^{(t-1)}}
&=
\sum_{y_t\in[d]}
\sum_{\widetilde D^{(t)}}
\ket{y_t}\ket{\widetilde D^{(t)}}
\bra{\widetilde D^{(t)}}\bra{y_t}
\UPW \fO \UPW^\dagger
\ket{x_t}\ket{D^{(t-1)}}
\nonumber\\
\label{eq:appF_exact_database_oracle}
&=
\sum_{y_t\in[d]}\ket{y_t}
\sum_{\widetilde D^{(t)}}
A_t\bigl(\widetilde D^{(t)};D^{(t-1)},x_t,y_t\bigr)
\ket{\widetilde D^{(t)}},
\\
\label{eq:appF_exact_amplitude}
A_t\bigl(\widetilde D^{(t)};D^{(t-1)},x_t,y_t\bigr)
&\coloneqq
\bra{\widetilde D^{(t)}}\bra{y_t}
\UPW \fO \UPW^\dagger
\ket{x_t}\ket{D^{(t-1)}}
\\
&=
\sqrt{\binom{t-1}{w^{(t-1)}}}
\sum_{\lambda\vdash_d t}
\sum_{S\in\SYT(\lambda)}
\sqrt{\frac{d_{S_{t-1}}}{m_{S_{t-1}}d_\lambda}}
\sum_{L,L'\in\GT(\lambda)}
\Sch^x_{\lambda,L,S}
\overline{\Sch^y_{\lambda,L',S}}
\bra{\widetilde D^{(t)}}\UPW\ket{\lambda,L,\bar L'}
\nonumber\\
&=
\sqrt{\binom{t}{\tilde{w}^{(t)}}\binom{t-1}{w^{(t-1)}}}
\sum_{\lambda\vdash_d t}
\sum_{T,S\in\SYT(\lambda)}
\sqrt{\frac{d_{S_{t-1}}}{m_{S_{t-1}}d_\lambda m_\lambda}}
\nonumber\\
&\qquad{}\times
\sum_{L,L'\in\GT(\lambda)}
\overline{\Sch^{\tilde{x}}_{\lambda,L,T}}\,
\Sch^{\tilde{y}}_{\lambda,L',T}\,
\Sch^x_{\lambda,L,S}
\overline{\Sch^y_{\lambda,L',S}}
\nonumber\\
&=
\sqrt{\binom{t}{\tilde{w}^{(t)}}\binom{t-1}{w^{(t-1)}}}
\sum_{\lambda\vdash_d t}
\sum_{T,S\in\SYT(\lambda)}
\sqrt{\frac{d_{S_{t-1}}}{m_{S_{t-1}}d_\lambda m_\lambda}}\,
K_\lambda(\tilde{x},x;T,S)
\overline{K_\lambda(\tilde{y},y;T,S)}.
\nonumber
\end{align}
Here $S_{t-1}$ is obtained from $S$ by removing the box containing $t$.

For $D=\{(a_k,b_k)\}_{k=1}^{t}$ define the marginals
\begin{align}
    \margx(D)_a
    &\coloneqq
    \#\{k:a_k=a\},
    &
    \margy(D)_b
    &\coloneqq
    \#\{k:b_k=b\}.
\end{align}
Since $E^\lambda_{T,S}\in\C[\mathfrak S_t]$, nonzero amplitude implies
\begin{align}
\label{eq:appF_support}
    A_t\bigl(\widetilde D^{(t)};D^{(t-1)},x_t,y_t\bigr)\neq0
    \quad\Longrightarrow\quad
    \begin{cases}
    \margx\bigl(\widetilde D^{(t)}\bigr)
    =
    \margx\bigl(D^{(t-1)}\bigr)+\e_{x_t},\\
    \margy\bigl(\widetilde D^{(t)}\bigr)
    =
    \margy\bigl(D^{(t-1)}\bigr)+\e_{y_t}.
    \end{cases}
\end{align}
So the exact oracle records the updated input and output multisets, while the pairing between them can be coherently reshuffled.

\begin{example}
For $t=1$, the input database is $D^{(0)}=\varnothing$ and the only
partition is $\lambda=(1)$. The unique tableau has
$S_0=\varnothing$, with
$d_{\varnothing}=m_{\varnothing}=1$, while
$d_{(1)}=d$, $m_{(1)}=1$, and
\begin{equation}
K_{(1)}(\tilde{x},x_1)
=
\delta_{\tilde{x},x_1}.
\end{equation}
Consequently,
\begin{align}
A_1\bigl(\{(x_1,y_1)\};\varnothing,x_1,y_1\bigr)
&=
\frac{1}{\sqrt d},
\qquad \qquad
\fO_{\mathrm{DB}}\ket{x_1}\ket{\varnothing}
=
\frac{1}{\sqrt d}
\sum_{y_1\in[d]}
\ket{y_1}\ket{\{(x_1,y_1)\}}.
\label{eq:appF_t1_action}
\end{align}
Thus the first query is exactly flat over the output alphabet. The coefficient in Eq.~\eqref{eq:appF_exact_amplitude} is already
nontrivial at $t=2$. 

For $t=2$, let
\begin{equation}
D^{(1)}=\{(x_1,y_1)\},
\end{equation}
and fix the second-query input $x_2$ and one value $y_2$ of the
output register. Assume first that $d\geq 2$, $x_2\neq x_1$, and
$y_2\neq y_1$. Each of the two $\mathfrak{S}_2$ sectors has a unique standard
tableau, and its matrix unit is
\begin{align}
E^{(2)}_{T,T}
&=
\frac{1}{2}\bigl(\1+(12)\bigr),
&
E^{(1,1)}_{T,T}
&=
\frac{1}{2}\bigl(\1-(12)\bigr),
\label{eq:appF_S2_projectors}
\\
d_{(2)}
&=
\frac{d(d+1)}{2},
&
d_{(1,1)}
&=
\frac{d(d-1)}{2}.
\label{eq:appF_S2_dimensions}
\end{align}
Moreover, $m_{(2)}=m_{(1,1)}=1$.

The support rule in Eq.~\eqref{eq:appF_support} leaves exactly two
output databases. For the direct pairing, choose
\begin{equation}
\tilde{x}=(x_1,x_2),
\qquad
\tilde{y}=(y_1,y_2),
\end{equation}
whereas for the crossed pairing, choose
\begin{equation}
\tilde{x}=(x_1,x_2),
\qquad
\tilde{y}=(y_2,y_1).
\end{equation}
Suppressing the unique tableau label in each sector,
Eq.~\eqref{eq:appF_K_matrix_unit} gives
\begin{align}
K_{(2)}\bigl((x_1,x_2),(x_1,x_2)\bigr)
&=
\frac{1}{2},
&
K_{(1,1)}\bigl((x_1,x_2),(x_1,x_2)\bigr)
&=
\frac{1}{2},
\\
K_{(2)}\bigl((y_1,y_2),(y_1,y_2)\bigr)
&=
\frac{1}{2},
&
K_{(1,1)}\bigl((y_1,y_2),(y_1,y_2)\bigr)
&=
\frac{1}{2},
\\
K_{(2)}\bigl((y_2,y_1),(y_1,y_2)\bigr)
&=
\frac{1}{2},
&
K_{(1,1)}\bigl((y_2,y_1),(y_1,y_2)\bigr)
&=
-\frac{1}{2}.
\end{align}
Since the two pairs in either output database are distinct, $\binom{2}{\tilde{w}^{(2)}}=2$.
Substitution into Eq.~\eqref{eq:appF_exact_amplitude} yields
\begin{align}
&A_2\Bigl(
\{(x_1,y_1),(x_2,y_2)\};
\{(x_1,y_1)\},x_2,y_2
\Bigr)
=
\frac{1}{2}
\left(
\frac{1}{\sqrt{d+1}}
+
\frac{1}{\sqrt{d-1}}
\right)
=
\frac{1}{\sqrt d}
+
\frac{3}{8d^{5/2}}
+
O\!\left(d^{-9/2}\right),
\label{eq:appF_t2_direct}
\\[1mm]
&A_2\Bigl(
\{(x_1,y_2),(x_2,y_1)\};
\{(x_1,y_1)\},x_2,y_2
\Bigr)
=
\frac{1}{2}
\left(
\frac{1}{\sqrt{d+1}}
-
\frac{1}{\sqrt{d-1}}
\right)
=
-\frac{1}{2d^{3/2}}
-
\frac{5}{16d^{7/2}}
+
O\!\left(d^{-11/2}\right).
\label{eq:appF_t2_crossed}
\end{align}
All other output databases have zero amplitude. Thus the direct
append branch is of order $d^{-1/2}$, whereas the first coherent
re-pairing is smaller by one power of $d$. Finally, if
$x_2\neq x_1$ but $y_2=y_1$, then
\begin{align}
&A_2\Bigl(
\{(x_1,y_1),(x_2,y_1)\};
\{(x_1,y_1)\},x_2,y_1
\Bigr)=
\frac{1}{\sqrt{d+1}}.
\label{eq:appF_t2_output_collision}
\end{align}
If $x_2=x_1$ but $y_2\neq y_1$, then
\begin{align}
&A_2\Bigl(
\{(x_1,y_1),(x_1,y_2)\};
\{(x_1,y_1)\},x_1,y_2
\Bigr)=
\frac{1}{\sqrt{d+1}}.
\label{eq:appF_t2_input_collision}
\end{align}
If $x_2=x_1$ and $y_2=y_1$, then
\begin{align}
&A_2\Bigl(
\{(x_1,y_1),(x_1,y_1)\};
\{(x_1,y_1)\},x_1,y_1
\Bigr)=
\sqrt{\frac{2}{d+1}}.
\label{eq:appF_t2_double_collision}
\end{align}
As a check, for $x_2\neq x_1$ the total squared norm of the $t=2$
output is
\begin{equation}
(d-1)\frac{1}{2}
\left(
\frac{1}{d+1}
+
\frac{1}{d-1}
\right)
+
\frac{1}{d+1}
=
1.
\label{eq:appF_t2_norm_check}
\end{equation}
When $x_2=x_1$, the squared norm is
\begin{equation}
(d-1)\frac{1}{d+1}
+
\frac{2}{d+1}
=
1,
\end{equation}
since the $d-1$ branches with $y_2\neq y_1$ have amplitude
$1/\sqrt{d+1}$, while the repeated branch $y_2=y_1$ has amplitude
$\sqrt{2/(d+1)}$.
\end{example}

Finally, we compare our exact forward oracle with Ma--Huang approximate path-recording oracle.
Ma and Huang~\cite{ma2024} give two related path-recording constructions: a one-register oracle for forward-only access and a two-register oracle for forward and inverse access.  
Their forward oracle keeps only the ``append-only'' $D \to D\cup\{(x,y)\}$ transition:
\begin{align}
\label{eq:appF_MH_oracle}
    \fO_{\mathrm{MH}}\ket{x}\ket{D}
    =
    \frac{1}{\sqrt{d-|D|}}
    \sum_{\substack{y\in[d]\\ y\notin\operatorname{Im}(D)}}
    \ket{y}\ket{D\cup\{(x,y)\}},
\end{align}
on relation databases with $|\operatorname{Im}(D)|=|D|<d$.

This oracle approximates our exact oracle.
For a $t$-query algorithm $\mathcal{A}$, let $\rho_{\mathrm{exact}}(\mathcal{A})$ be the final reduced state produced by $\fO_{\mathrm{DB}}$, and let $\rho_{\mathrm{MH}}(\mathcal{A})$ be the final reduced state produced by Eq.~\eqref{eq:appF_MH_oracle}.
Since our oracle is exact, it can be replaced by the corresponding Haar average. 
To that end, Ma and Huang proved in \cite{ma2024} that
\begin{align}
\label{eq:appF_MH_approx}
    \operatorname{TD}\bigl(
    \rho_{\mathrm{exact}}(\mathcal{A}),
    \rho_{\mathrm{MH}}(\mathcal{A})
    \bigr)
    \leq 
    O(t^2/d),
\end{align}
which is negligible when $t$ is small enough compared to $d$.

\subsection{\texorpdfstring{Mixed queries}{Mixed queries}}
\label{sec:mixed_queries_path_recording}

For mixed queries it is better to use signed relation records rather than a single pair alphabet.
Note that forward and transpose queries correspond to $\rho$, while conjugate and inverse queries use $\bar\rho$.
Transpose and inverse queries only exchange the local matrix indices. The four local matrix coefficients are:
\begin{align}
    \mathrm{f}:&\quad \rho(g)_{y,x},
    &
    \mathrm{t}:&\quad \rho(g)_{x,y},
    &
    \mathrm{c}:&\quad \bar\rho(g)_{y,x},
    &
    \mathrm{i}:&\quad \bar\rho(g)_{x,y}.
\end{align}
A mixed database is therefore a pair $D=(L,R)$ of multisets. 
The multiset $L$ stores positive records and $R$ stores negative records, with convention
\begin{align}
\label{eq:appF_mixed_record_convention}
    \mathrm{f}: (x,y)\in L,
    \qquad
    \mathrm{t}: (y,x)\in L,
    \qquad
    \mathrm{c}: (x,y)\in R,
    \qquad
    \mathrm{i}: (y,x)\in R.
\end{align}
The coefficient function corresponding to \cref{eq:peter_weyl_ud_matrix_monomial} represented by $D=(L,R)$ is
\begin{align}
\label{eq:appF_signed_relation_monomial}
    \prod_{(x,y)\in L}\rho(g)_{y,x}
    \prod_{(x,y)\in R}\overline{\rho(g)_{y,x}}.
\end{align}
The implementation of mixed queries relies on mixed CG transforms, which we outline in \hyperref[app:mixed_cg_transforms]{Appendix~D.1}.

\paragraph{The Ma--Huang approach.}
Use the pair $D=(L,R)$ introduced above, and write $\ket{L,R}$ for the two orthonormal relation-state registers.  In the notation of this manuscript, the left and right partial isometries of Ref.~\cite{ma2024} are
\begin{align}
\label{eq:appF_MH_VL_VR}
    V_{\mathrm{MH}}^L\ket{x}\ket{L,R}
    &\coloneqq
    \frac{1}{\sqrt{d-|\operatorname{Im}(L\cup R)|}}
    \sum_{\substack{y\in[d]\\y\notin\operatorname{Im}(L\cup R)}}
    \ket{y}\ket{L\cup\{(x,y)\},R},
    \\
    V_{\mathrm{MH}}^R\ket{y}\ket{L,R}
    &\coloneqq
    \frac{1}{\sqrt{d-|\operatorname{Dom}(L\cup R)|}}
    \sum_{\substack{x\in[d]\\x\notin\operatorname{Dom}(L\cup R)}}
    \ket{x}\ket{L,R\cup\{(x,y)\}}.
\end{align}
These formulas apply when the combined database length is at most $d-1$.  The first map creates a record in $L$, while the second creates a record in $R$.  Their strong forward path-recording oracle is not simply their sum, because the creation and annihilation branches would overlap.  It is the partial isometry
\begin{align}
\label{eq:appF_MH_strong_oracle}
    V_{\mathrm{MH}}
    \coloneqq
    V_{\mathrm{MH}}^L
    \bigl(I-V_{\mathrm{MH}}^R(V_{\mathrm{MH}}^R)^\dagger\bigr)
    +
    \bigl(I-V_{\mathrm{MH}}^L(V_{\mathrm{MH}}^L)^\dagger\bigr)
    (V_{\mathrm{MH}}^R)^\dagger.
\end{align}
Thus a forward query either creates an $L$-record through $V_{\mathrm{MH}}^L$ or annihilates an $R$-record through $(V_{\mathrm{MH}}^R)^\dagger$.  The two projectors remove the overlap of these branches.  Taking the adjoint gives the inverse query, which either creates an $R$-record through $V_{\mathrm{MH}}^R$ or annihilates an $L$-record through $(V_{\mathrm{MH}}^L)^\dagger$.

\paragraph{Fourier interpretation of $L$ and $R$.}
For $\U(d)$, a rational highest weight (staircase) is described by its positive and negative diagrams $[\alpha,\beta]$.  The mixed Pieri rules have the schematic form
\begin{align}
\label{eq:appF_MH_fourier_branches}
    [\alpha,\beta]\otimes\square
    &\longrightarrow
    [\alpha+\square,\beta]
    \quad\text{or}\quad
    [\alpha,\beta-\square],
    \\
    [\alpha,\beta]\otimes\bar\square
    &\longrightarrow
    [\alpha,\beta+\square]
    \quad\text{or}\quad
    [\alpha-\square,\beta],
\end{align}
where only valid addable or removable boxes occur.  If $p$ defining and $q$ dual defining factors have been processed, a branch with $s$ contractions has $|\alpha|=p-s$ and $|\beta|=q-s$; deleting a record from the opposite Ma--Huang register is the path-recording version of the same contraction.  Equation~\eqref{eq:appF_MH_fourier_branches} identifies the representation-theoretic role of the two Ma--Huang registers.  The $L$ register refines the positive diagram $\alpha$, and the $R$ register refines the negative diagram $\beta$: $V_{\mathrm{MH}}^L$ corresponds to creating a positive box, $(V_{\mathrm{MH}}^R)^\dagger$ to removing a negative box, $V_{\mathrm{MH}}^R$ to creating a negative box, and $(V_{\mathrm{MH}}^L)^\dagger$ to removing a positive box.  Along a branch matched to a rational Pieri path and starting from the empty database, the gradings satisfy
\begin{align}
\label{eq:appF_MH_degree_correspondence}
    |L|=|\alpha|,
    \qquad
    |R|=|\beta|,
    \qquad
    |L|-|R|=|\alpha|-|\beta|.
\end{align}

Our Forier oracle automatically implements an update accroding to \cref{eq:appF_MH_fourier_branches}.
However, as was shown in \cref{sec:fourier_compressed_forward_oracle} its database action is much more complicated compared to \cref{eq:appF_MH_VL_VR}.

\subsection{\texorpdfstring{General exact Fourier and path-recording oracles}{General exact Fourier and path-recording oracles}}
\label{sec:general_fourier_recording_database}
We now return to an arbitrary compact group representation $\rho:G\to\U(V)$.
The previous appendix used unordered input-output pairs because $\rho=\square$ for $\U(d)$ has the Schur--Weyl commutant as quotient of $\C[\mathfrak S_t]$.
For general $(G,\rho)$, the replacement is invariance under the full commutant of $\rho^{\otimes t}$ (or $\rho^{\otimes p} \otimes \bar{\rho}^{\otimes q}$ when using mixed queries).
Set
\begin{align}
\label{eq:general_db_H_A_def}
    \mathcal A_t
    \coloneqq
    \End_G(V^{\otimes t}).
\end{align}
Choose the Schur transform
\begin{align}
\label{eq:general_db_schur}
    \USch:
    V^{\otimes t}
    \longrightarrow
    \bigoplus_{\lambda\in\hat{G}^{(t)}}
    V_\lambda\otimes M_{\lambda}.
\end{align}
The map $\USch$ and the multiplicity spaces $M_\lambda$ depend on $t$, $G$, and $\rho$; we suppress this dependence in the notation.
The degree-$t$ Fourier memory is the corresponding visible Peter--Weyl space
\begin{align}
\label{eq:general_pw_visible_space}
    \mathcal{H}_{\mathrm{PW}}^{(t)}(G,\rho)
    \simeq
    \bigoplus_{\lambda\in\hat{G}^{(t)}}
    V_\lambda\otimes\bar V_\lambda .
\end{align}
Then
\begin{align}
\label{eq:general_db_commutants}
    \mathcal A_t
    &\simeq
    \bigoplus_{\lambda\in\hat{G}^{(t)}}
    \1_{V_\lambda}\otimes\End(M_{\lambda}),
    \\
    \mathcal A_t'
    &\simeq
    \bigoplus_{\lambda\in\hat{G}^{(t)}}
    \End(V_\lambda)\otimes\1_{M_{\lambda}},
\end{align}
where $\mathcal A_t'$ denotes the commutant of $\mathcal A_t$.
The degree-$t$ database space can be seen as subspace of $V^{\otimes t}\otimes\bar{V}^{\otimes t}$.
\begin{align}
\label{eq:general_db_code}
    \mathsf{DB}_t(G,\rho)
    \coloneqq
    \operatorname{span}\{\dket{X}:X\in \mathcal A_t'\}
    \subseteq
    V^{\otimes t}\otimes\bar{V}^{\otimes t}.
\end{align}
Equivalently,
\begin{align}
\label{eq:general_db_invariants}
    \mathsf{DB}_t(G,\rho)
    =
    \bigl(V^{\otimes t}\otimes\bar{V}^{\otimes t}\bigr)^{\mathcal U(\mathcal A_t)},
\end{align}
where $u\in\mathcal U(\mathcal A_t)$ acts as $u\otimes\bar u$.
Thus the database space is invariant under every unitary symmetry which commutes with $\rho^{\otimes t}$.

Let
\begin{align}
\label{eq:general_db_matrix_unit}
    F^\lambda_{a,b}
    \coloneqq
    \USch^\dagger
    \bigl(\ketbra{\lambda,a}{\lambda,b}\otimes\1_{M_{\lambda}}\bigr)
    \USch.
\end{align}
Since $\norm{F^\lambda_{a,b}}_{\mathrm{HS}}^2=\dim M_{\lambda}$, an orthonormal database basis is
\begin{align}
\label{eq:general_db_basis}
    \ket{D^{(t)}_{\lambda,a,b}}_{\mathrm{DB}}
    \coloneqq
    \frac{1}{\sqrt{\dim M_{\lambda}}}\,
    \dket{F^\lambda_{a,b}}.
\end{align}
Using two Schur transforms, we get
\begin{align}
\label{eq:general_db_two_schur_basis}
    J_t\ket{\lambda,a}\ket{\bar{\lambda},b}
    \coloneqq
    \ket{D^{(t)}_{\lambda,a,b}}_{\mathrm{DB}}
    =
    \bigl(\USch\otimes\bUSch\bigr)^\dagger
    \Bigl(
    \ket{\lambda,a}\ket{\bar{\lambda},b}
    \ket{\Omega_{\lambda}}
    \Bigr),
\end{align}
where $\ket{\Omega_{\lambda}}$ are EPR states on multiplicity registers:
\begin{align}
\label{eq:general_db_multiplicity_epr}
    \ket{\Omega_{\lambda}}
    =
    \frac{1}{\sqrt{\dim M_{\lambda}}}
    \sum_{r=1}^{\dim M_{\lambda}}
    \ket{r}_{M_{\lambda}}\ket{\bar r}_{\bar{M}_{\lambda}}.
\end{align}
This formula defines the isometry $J:\mathcal{H}_{\mathrm{PW}}^{(t)}(G,\rho)\to\mathsf{DB}_t(G,\rho)$.
If a basis of $V$ is fixed, the vectors in $V^{\otimes t}\otimes\bar V^{\otimes t}$ may be read as ordered pair records $\ket{(x_1,y_1),\ldots,(x_t,y_t)}$.
This two-Schur map $J$ gives the canonical coordinates of the Fourier pair $\ket{\lambda,a}\ket{\bar\lambda,b}$ inside this database space.

Unlike the defining $\U(d)$ case, there is no nice analogue of $\USch^{(d^2,t)}$ that identifies a simple pair-alphabet Schur basis with the whole visible Peter--Weyl truncation space $\mathcal{H}_{\mathrm{PW}}^{(t)}(G,\rho)$.
Thus for general $(G,\rho)$ we define the exact database space by the commutant-invariant code above.

The corresponding database-basis oracle is obtained by conjugating the Fourier oracle degree by degree.
For the forward query,
\begin{align}
\label{eq:general_db_oracle}
    \fO_{\mathrm{DB},t}
    =
    (J_t\otimes I_V)\,
    \fO_t\,
    (J_{t-1}^\dagger\otimes I_V).
\end{align}
The same formula applies to $\cO,\tO,\iO$ with the corresponding Fourier compressed oracles and Schur transforms.

\section{\texorpdfstring{Efficient Clebsch--Gordan transforms for $\U(d)$}{Efficient Clebsch-Gordan transforms for U(d)}}\label{app:cg_trans}

In this section, we provide a proof of \cref{thm:main_2} by describing an efficient construction of (dual) Clebsch--Gordan transforms for the unitary group with defining representation $\square$, which is needed to achieve $\mathrm{poly}(n)$ complexity, where $n=\log_2(d)$.
The main components of this construction were first described in detail explicitly in \cite{burchardt2025krovi}, based on the ideas from \cite{harrow2005applications,krovi2019efficient}.
We present them here with minor modifications and adaptations needed for our setting. 

We start with the $\mathrm{CG}$ transform, which is presented in \cref{fig:CG_circuit_comp}. 
It consists of two main components: preprocessing gate $\mathrm{P}$ and compressed $\widetilde{\mathrm{CG}}$ transform.
The preprocessing block changes the representation coordinates from $(p^{(k-1)},\widetilde{M}^{(k-1)},x_k)$ to $(p^{(k)},N^{(k-1)},\tilde{x}_k)$.
The circuit $\widetilde{\CG}_k$ then maps $(N^{(k-1)},\tilde{x}_k)$ to $(\widetilde{M}^{(k)},z_k)$ and leaves $p^{(k)}$ untouched.

\begin{figure*}[!h]
    \centering
    $\includegraphics[width=0.22\textwidth, valign=c, page=4]{figures/circuits.pdf}
    =
    \includegraphics[width=0.4\textwidth, valign=c, page=5]{figures/circuits.pdf}$
    \caption{Efficient $\mathrm{CG}$ transform consists of a preprocessing operation $\mathrm{P}$, which is needed to modify compressed Gelfand--Tsetlin pattern $\widetilde{M}^{(k-1)}$ upon arrival of new symbol $x_k$.}
    \label{fig:CG_circuit_comp}
\end{figure*}

The preprocessing gate $\mathrm{P}$ is needed to efficiently use the memory space by preparing the input $\ket{x_k}$ to be processed correctly within the compressed $\widetilde{\mathrm{CG}}$ transform.
The operation $\mathrm{P}$ consists of several steps, see \cref{fig:CG_circuit_P}.

The main intuition $\mathrm{P}$ comes from the following.
Basis vectors of unitary group irreps are labelled by Gelfand--Tsetlin patterns or, equivalently, by Semistandard Young tableaux.
Each pattern has an associated weight $w=(w_1,\dotsc,w_d)$, which counts number of $1$, $2$ and so on. It can be equivalently represented by a composition $\mu$ and a alphabet map $p$, that is, $w \cong (\mu,p)$, \cite{burchardt2025krovi}. 
A given Gelfand--Tsetlin pattern $M \in \mathrm{GT}(\lambda)$ can be equivalently represented by a smaller GT pattern $\widetilde{M} \in \mathrm{GT}(\lambda,\mu)$ of length $\ell(\mu)$, where $\ell(\mu)$ is the length of the composition $\mu$, together with the alphabet map $p$, that is, $M \cong (\widetilde{M},p)$. 
For example, a Gelfand--Tsetlin pattern $M = ((0),(2,0),(2,0,0),(2,1,0,0),(3,2,0,0,0))$ corresponds to $\widetilde{M}=((2),(2,1),(3,2,0))$ and $p = (2, 4, 5)$:
\begin{equation}
\sof*{\,
\begin{smallmatrix}
    3 & & 2 & & 0 & & 0 & & 0\\
    & 2 & & 1 & & 0 & & 0 \\
    & & 2 & & 0 & & 0 & \\
    & & & 2 & & 0 & & \\
    & & & & 0 & & & \\
\end{smallmatrix}} 
\quad \equiv \quad 
\of*{
\sof*{\begin{smallmatrix}
    3 & & 2 & & 0 \\
    & 2 & & 1 & \\
    & & 2 & & \\
\end{smallmatrix}}, \; (2, 4, 5)
}.
\end{equation}

Gate $\mathrm{P}$ effectively implements the above compression by handling newly arrived symbol $x_k \in [d]$ and updating old pair $(\widetilde{M},p)$. It consists of four steps $A$, $B$, $C$, and $D$.
We use the convention that $c_k=1$ means that $x_k$ is already present in $p^{(k-1)}$, while $c_k=0$ means that $x_k$ is a new alphabet symbol.

\begin{figure*}[!h]
    \centering
    \includegraphics[width=0.7\textwidth, valign=c, page=7]{figures/circuits.pdf}
    \caption{Preprocessing operation $P_k$ consists of four steps $A$, $B$, $C$ and $D$. 
    Steps $A$ and $B$ are needed to modify $x_k$ and update weight information $p^{(k-1)}$ according the newly arrived symbol $x_k$.
    Step $C$ modifies Gelfand--Tsetlin pattern $\widetilde{M}^{(k-1)}$ by adding one new row and shifting other rows according to the newly arrived symbol $x_k$.
    Finally, step $D$ uncomputes one auxilary register.}
    \label{fig:CG_circuit_P}
\end{figure*}

The transformation $A$ records the value of $x_k \in [d]$ and transforms it into $\tilde{x}_k \in [k]$: 
\begin{align}
    A &: \ket{p^{(k-1)}}\ket{x_k} \to \ket{p^{(k-1)}}\ket{x_k}\ket{c_k}\ket{\tilde{x}_k} \\
    c_k &:= \begin{cases}
        1 &\text{if } x_k \in p^{(k-1)} \\
        0 &\text{if } x_k \notin p^{(k-1)}
    \end{cases} \\
    \tilde{x}_k &:= \begin{cases}
        i \text{ s.t. } p^{(k-1)}_i = x_k &\text{if } x_k \in p^{(k-1)} \\
        i \text{ s.t. } p^{(k-1)}_{i-1} < x_k < p^{(k-1)}_{i} &\text{if } x_k \notin p^{(k-1)}
    \end{cases}
\end{align}
where $c_k \in \set{0,1}$ is a bit which indicated if the symbol $x_k$ is new or not (if $x_k \in p^{(k-1)}$ then the symbol $x_k$ have already appeared before), and $i$ is a position of $x_k$ within tuple $p^{(k-1)}$.
Note that $A$ is clearly reversible. 
The elementary gates in the expanded circuit of \cref{fig:CG_circuit_A} are reversible comparison gates, following the preprocessing construction of Ref.~\cite[Sec.~6]{burchardt2025krovi}.
We write unused slots of $p^{(k-1)}$ as zeros and ignore them in the comparisons.
Equivalently, the two values computed by $A$ are
\begin{align}
    c_k
    &=
    \sum_{i=1}^{k-1}
    \mathbf{1}\{p_i^{(k-1)}=x_k\},\\
    \tilde{x}_k
    &=
    1+
    \sum_{i=1}^{k-1}
    \mathbf{1}\{0<p_i^{(k-1)}<x_k\}.
\end{align}
Because the nonzero entries of $p^{(k-1)}$ are distinct, the first sum is a bit.
The gate $A_i$ acts on the old-symbol bit by
\begin{align}
    A_i:\quad
    \ket{p_i^{(k-1)}}\ket{x_k}\ket{b}
    \mapsto
    \ket{p_i^{(k-1)}}\ket{x_k}
    \ket{b\oplus \mathbf{1}\{p_i^{(k-1)}=x_k\}} .
\end{align}
The gate $A_i'$ acts on $r$ register by
\begin{align}
    A_i':\quad
    \ket{p_i^{(k-1)}}\ket{x_k}\ket{r}
    \mapsto
    \ket{p_i^{(k-1)}}\ket{x_k}
    \ket{r+\mathbf{1}\{0<p_i^{(k-1)}<x_k\}} .
\end{align}

\begin{figure*}[!h]
    \centering
    $\includegraphics[width=0.2\textwidth, valign=c, page=8]{figures/circuits.pdf}
    =
    \includegraphics[width=0.7\textwidth, valign=c, page=9]{figures/circuits.pdf}$
    \caption{
    Operation $A$.
    The expanded circuit computes $c_k$ with the equality tests $A_i$ and computes $\tilde{x}_k$ with the rank increments $A_i'$ defined above.}
    \label{fig:CG_circuit_A}
\end{figure*}

The operation $B$ updates the tuple $p^{(k-1)}$ to $p^{(k)}$ depending on the value of $c_k$:
\begin{align}
    B &: \ket{p^{(k-1)}}\ket{x_k}\ket{c_k}\ket{\tilde{x}_k} \to \ket{p^{(k)}}\ket{c_k}\ket{\tilde{x}_k} \\
    p^{(k)} &:= \begin{cases}
        (p^{(k-1)},0) &\text{if } c_k=1 \\
        (p^{(k-1)}_1,\dotsc,p^{(k-1)}_{\tilde{x}_k-1},x_k,p^{(k-1)}_{\tilde{x}_k},\dotsc,p^{(k-1)}_{k-1}) &\text{if } c_k=0
    \end{cases} 
\end{align}
The implementation of $B$ is a controlled insertion followed by a controlled erasure of the old copy of $x_k$.
Let $q=(q_1,\ldots,q_k)$ initially denote the list $(p_1^{(k-1)},\ldots,p_{k-1}^{(k-1)},x_k)$.
For $i=1,\ldots,k-1$, the gate $B_i$ is the adjacent controlled swap
\begin{align}
    B_i:\quad
    \ket{q_i}\ket{q_{i+1}}\ket{c_k}\ket{\tilde{x}_k}
    \mapsto
    \begin{cases}
        \ket{q_{i+1}}\ket{q_i}\ket{c_k}\ket{\tilde{x}_k},
        & c_k=0\ \mathrm{and}\ i\geq \tilde{x}_k,\\
        \ket{q_i}\ket{q_{i+1}}\ket{c_k}\ket{\tilde{x}_k},
        & \mathrm{otherwise}.
    \end{cases}
\end{align}
Applied in the order $B_{k-1}B_{k-2}\cdots B_1$, these gates move the new symbol from the last slot into position $\tilde{x}_k$ and shift all later slots one step to the right.
If $c_k=1$, no $B_i$ gate fires and the last slot still contains the duplicate value $x_k$.
For $i=1,\ldots,k-1$, the gate $B_i'$ erases this duplicate by a controlled modular subtraction on the last slot:
\begin{align}
    B_i':\quad
    \ket{p_i^{(k-1)}}\ket{q_k}\ket{c_k}\ket{\tilde{x}_k}
    \mapsto
    \begin{cases}
        \ket{p_i^{(k-1)}}\ket{q_k-p_i^{(k-1)}}\ket{c_k}\ket{\tilde{x}_k},
        & c_k=1\ \mathrm{and}\ \tilde{x}_k=i,\\
        \ket{p_i^{(k-1)}}\ket{q_k}\ket{c_k}\ket{\tilde{x}_k},
        & \mathrm{otherwise}.
    \end{cases}
\end{align}
On the promised inputs, the triggered subtraction has $q_k=x_k=p_i^{(k-1)}$, so the last slot becomes $0$.
Thus the $B_i$ gates handle the new-symbol case and the $B_i'$ gates handle the old-symbol case.

\begin{figure*}[!h]
    \centering
    $\includegraphics[width=0.2\textwidth, valign=c, page=10]{figures/circuits.pdf}
    =
    \includegraphics[width=0.7\textwidth, valign=c, page=11]{figures/circuits.pdf}$
    \caption{
    Operation $B$.
    The expanded circuit uses the controlled swaps $B_i$ and the controlled erasures $B_i'$ defined above.}
    \label{fig:CG_circuit_B}
\end{figure*}

Next, the operation $C$ transforms the GT pattern $\ket{\widetilde{M}^{(k-1)}}$ differently according to the value of $c_k$:
\begin{align}
    C &: \ket{c_k}\ket{\tilde{x}_k}\ket{\widetilde{M}^{(k-1)}} \to \ket{c_k}\ket{\tilde{x}_k}\ket{N^{(k-1)}} \\
    &\text{If } c_k = 1: \\
    N^{(k-1)}_l &:= \begin{cases}
        \widetilde{M}^{(k-1)}_l, &\text{if } 1 \leq l \leq k-1  \\
        (\widetilde{M}^{(k-1)}_{k-1},0) &\text{if } l = k
    \end{cases} \\
    &\text{If } c_k = 0: \\
    N^{(k-1)}_l &:= \begin{cases}
        \widetilde{M}^{(k-1)}_l &\text{if } 1 \leq l < \tilde{x}_k  \\
        (\widetilde{M}^{(k-1)}_{l-1},0) &\text{if } \tilde{x}_k \leq l \leq k 
    \end{cases} 
\end{align}
The circuit realizes $C$ by row-selector gates $C_l$, one for each output row $N^{(k-1)}_l$.
The controls of $C_l$ are $c_k$ and $\tilde{x}_k$.
For $1\leq l<k$, its data inputs are the two adjacent candidate rows $\widetilde{M}^{(k-1)}_{l-1}$ and $\widetilde{M}^{(k-1)}_l$.
For $l=k$, the second candidate is absent, and the only possible output is the padded last row.
We use the boundary convention $\widetilde{M}^{(k-1)}_0=(0)$.
The output written by $C_l$ is defined by the following formula:
\begin{align}
    C_l:\quad
    N^{(k-1)}_l
    =
    \begin{cases}
        \widetilde{M}^{(k-1)}_l,
        & l<k\ \mathrm{and}\ (c_k=1\ \mathrm{or}\ l<\tilde{x}_k),\\[2pt]
        \bigl(\widetilde{M}^{(k-1)}_{l-1},0\bigr),
        & l=k\ \mathrm{or}\ (c_k=0\ \mathrm{and}\ \tilde{x}_k\leq l\leq k-1).
    \end{cases}
\end{align}
Thus $C_l$ either routes the old unshifted row $\widetilde{M}^{(k-1)}_l$ to output row $l$, or routes the previous row $\widetilde{M}^{(k-1)}_{l-1}$ with a trailing zero appended.
The gate $C$ is shown schematically in \cref{fig:CG_circuit_C}.
\begin{figure*}[!h]
    \centering
    $\includegraphics[width=0.18\textwidth, valign=c, page=12]{figures/circuits.pdf}
    =
    \includegraphics[width=0.76\textwidth, valign=c, page=14]{figures/circuits.pdf}$
    \caption{Operation $C$ copies or shifts the compressed Gelfand--Tsetlin rows according to $c_k$ and $\tilde{x}_k$.
    The expanded circuit uses the row-selectors $C_l$ defined above.}
    \label{fig:CG_circuit_C}
\end{figure*}
The intuition behind transformation $C$ is as follows.
When $x_k$ is an old symbol, i.e. when $c_k=1$, no new alphabet row is inserted and operation $C$ appends the padded row $(\widetilde{M}^{(k-1)}_{k-1},0)$ only to keep the work register at fixed length $k$.
When $x_k$ is a new symbol, i.e. when $c_k=0$, operation $C$ inserts a zero-extension row at position $\tilde{x}_k$ and shifts every later compressed row down by one position.
The map is reversible because the registers $c_k$ and $\tilde{x}_k$ specify exactly whether the last padded row should be removed or whether the inserted zero-extension row at position $\tilde{x}_k$ should be removed.

Finally, the operation $D$ uncomputes the additional bit $c_k$:
\begin{align}
    D : \ket{c_k}\ket{\tilde{x}_k}\ket{N^{(k-1)}} \to \ket{0}\ket{\tilde{x}_k}\ket{N^{(k-1)}},
\end{align}
To that end, define
\begin{align}
\label{eq:appC_delta_def}
    \Delta_{\tilde{x}_k}(N)
    \coloneqq
    \sum_{i=1}^{\tilde{x}_k} N^{(k-1)}_{\tilde{x}_k,i}
    -
    \sum_{i=1}^{\tilde{x}_k-1} N^{(k-1)}_{\tilde{x}_k-1,i}.
\end{align}
After operation $C$, the old-symbol case $c_k=1$ is equivalent to $\Delta_{\tilde{x}_k}(N)>0$, while the new-symbol case $c_k=0$ is equivalent to $\Delta_{\tilde{x}_k}(N)=0$.
Thus $D$ is implemented by reversibly computing the comparison bit $b_k=\mathbf{1}\{\Delta_{\tilde{x}_k}(N)>0\}$, applying $c_k\mapsto c_k\oplus b_k$, and uncomputing all arithmetic work registers.
In the compact circuit, the gate $\mathrm{cmp}_{\Delta}$ denotes the reversible comparison subroutine shown below.
\begin{align}
    \mathrm{cmp}_{\Delta}:
    \ket{\tilde{x}_k}\ket{N^{(k-1)}}\ket{0}_{b_k}
    \mapsto
    \ket{\tilde{x}_k}\ket{N^{(k-1)}}\ket{\mathbf{1}\{\Delta_{\tilde{x}_k}(N)>0\}}_{b_k},
\end{align}
where any internal arithmetic scratch is suppressed in the compact drawing.
The inverse gate $\mathrm{cmp}_{\Delta}^{\dagger}$ uncomputes this comparison bit after the controlled flip on $c_k$.
The circuit displays one implementation of $\mathrm{cmp}_{\Delta}$ and its inverse.
The left half of the expanded circuit is $\mathrm{cmp}_{\Delta}$, and the right half is $\mathrm{cmp}_{\Delta}^{\dagger}$.
The row-sum gates are defined as follows.
\begin{align}
    S_+:
    \ket{\tilde{x}_k}\ket{N^{(k-1)}}\ket{0}_{s_+}
    &\mapsto
    \ket{\tilde{x}_k}\ket{N^{(k-1)}}\ket{\sum_i N^{(k-1)}_{\tilde{x}_k,i}}_{s_+},\\
    S_-:
    \ket{\tilde{x}_k}\ket{N^{(k-1)}}\ket{0}_{s_-}
    &\mapsto
    \ket{\tilde{x}_k}\ket{N^{(k-1)}}\ket{\sum_i N^{(k-1)}_{\tilde{x}_k-1,i}}_{s_-}.
\end{align}
The zero-comparison gate $\mathrm{cmp}_0$ is defined as follows.
\begin{align}
    \mathrm{cmp}_0:
    \ket{s_+}\ket{s_-}\ket{0}_{b_k}
    \mapsto
    \ket{s_+}\ket{s_-}\ket{\mathbf{1}\{s_+-s_->0\}}_{b_k}.
\end{align}
The gates $S_+^\dagger$, $S_-^\dagger$, and $\mathrm{cmp}_0^\dagger$ are the inverse arithmetic gates that erase $s_+$, $s_-$, and $b_k$, respectively.
The controlled-NOT from $b_k$ to $c_k$ implements $c_k\mapsto c_k\oplus b_k$.
Thus the nontrivial gates in the expanded implementation of $D$ are the row-sum gates $S_\pm$, the zero-comparison gate $\mathrm{cmp}_0$, their inverses, and this controlled-NOT.
The schematic gate $D$ and its expanded implementation are shown in \cref{fig:CG_circuit_D}.

\begin{figure*}[!h]
    \centering
    $\includegraphics[width=0.22\textwidth, valign=c, page=13]{figures/circuits.pdf}
    =
    \includegraphics[width=0.54\textwidth, valign=c, page=15]{figures/circuits.pdf}$
    \caption{
    Operation $D$ uncomputes $c_k$ by comparing the row-sum gap $\Delta_{\tilde{x}_k}(N)$ with zero.
    The left schematic gate is the operation $D$, and the right circuit expands it into row-sum, zero-comparison, and uncomputation gates.}
    \label{fig:CG_circuit_D}
\end{figure*}

\begin{figure*}[!h]
    \centering
    \includegraphics[width=\textwidth, page=6]{figures/circuits.pdf}
    \caption{Full circuit for (dual) Clebsch--Gordan transforms $\widetilde{\mathrm{CG}}$ and $\widetilde{\mathrm{dCG}}$ transforms ($``+''$ corresponds to $\widetilde{\mathrm{CG}}$ and $``-''$ corresponds to $\widetilde{\mathrm{dCG}}$). 
    They consist of reduced Wigner transforms $\mathrm{RW}^{\pm}$ and simple arithmetic gates. 
    We refer to \cite{grinko2023gelfand} for implementation details of the gates.
    }
    \label{fig:CG_circuit}
\end{figure*}

\begin{remark}
    Note that implementing $\mathrm{dCG}^\dagger_k$ is easy if we have a promise, that the output irrep of $\mathrm{dCG}_k$ is described by a Young diagram, i.e. there are no negative entries in the highest weight corresponding to the output. 
    But this is indeed the case, since the input of the multiplicity space register from $\mathrm{CG}_k$ in \cref{fig:pr_O_U} ensures that the output of $\mathrm{dCG}^\dagger_k$ is a valid Young diagram: the ``minus'' gates in \cref{fig:CG_circuit} when run in reverse never produce negative entries.
\end{remark}

In total, operations $A,B,C,D$ can be done in $O(k^3)$ gate and depth complexity \cite{burchardt2025krovi}.
Depth and gate complexity of $\widetilde{\mathrm{CG}}$ and $\widetilde{\mathrm{dCG}}$ transforms is $\widetilde{O}(k^4)$, while memory complexity is $\widetilde{O}(k^2)$ \cite{nguyen2023mixed,grinko2023gelfand,burchardt2025krovi}. 
So, total complexity of $\mathrm{CG}_k$ and $\mathrm{dCG}_k$ is $\widetilde{O}(k^4)$, which implies $\widetilde{O}(k^4)$ gate and depth complexity for our compressed oracle $\fO$.

\subsection{\texorpdfstring{On implementation of mixed high-dimensional Clebsch--Gordan transforms}{On implementation of mixed high-dimensional Clebsch--Gordan transforms}}
\label{app:mixed_cg_transforms}

For mixed sequences of defining and dual defining factors, polynomial Gelfand--Tsetlin patterns are replaced by rational Gelfand--Tsetlin patterns.
The sparse-alphabet idea still applies, but the criterion for keeping a physical alphabet positions must be changed.
Positive and negative ``branching'' can cancel in the signed weight.
The support register must therefore record this cancellation.

Write a rational Gelfand--Tsetlin pattern as
\begin{align}
    M
    =
    (M_1\sqsubseteq M_2\sqsubseteq\cdots\sqsubseteq M_d),
    \qquad
    M_r=(M_{r,1}\geq\cdots\geq M_{r,r})\in\mathbb Z^r,
\end{align}
with $M_0=\varnothing$.  Each row has positive and negative parts,
\begin{align}
    M_r
    =
    (\alpha^{(r)}_1,\ldots,\alpha^{(r)}_a,
    0,\ldots,0,
    -\beta^{(r)}_b,\ldots,-\beta^{(r)}_1),
\end{align}
where $\alpha^{(r)}$ and $\beta^{(r)}$ are partitions, $a$ and $b$ may depend on $r$, and we take $\alpha^{(0)}=\beta^{(0)}=\varnothing$.
Interlacing implies
\begin{align}
    \alpha^{(r-1)}\sqsubseteq\alpha^{(r)},
    \qquad
    \beta^{(r-1)}\sqsubseteq\beta^{(r)}.
\end{align}
Thus the two nonnegative branching contents are
\begin{align}
    \mu_r^+
    &\coloneqq
    |\alpha^{(r)}|-|\alpha^{(r-1)}|,
    &
    \mu_r^-
    &\coloneqq
    |\beta^{(r)}|-|\beta^{(r-1)}|.
\end{align}
The signed row weight introduced before and the unsigned row activity are
\begin{align}
\label{eq:mixed_activity_def}
    w_r(M)
    &=
    \mu_r^+-\mu_r^-,
    \\
    \operatorname{act}_r(M)
    &\coloneqq
    \mu_r^++\mu_r^-
    =
    \sum_{j=1}^{r}|M_{r,j}|
    -
    \sum_{j=1}^{r-1}|M_{r-1,j}|.
\end{align}
Both $\mu_r^+$ and $\mu_r^-$ are nonnegative.
Hence $w_r=0$ may result either from $\mu_r^+=\mu_r^-=0$ or from a nontrivial cancellation $\mu_r^+=\mu_r^->0$, whereas $\operatorname{act}_r(M)=0$ detects the first case.

For a row $\nu$ of length $r-1$, define its canonical zero extension by
\begin{align}
\label{eq:mixed_zero_extension}
    \iota_0(\nu)
    \coloneqq
    \operatorname{sort}_{\downarrow}(\nu_1,\ldots,\nu_{r-1},0),
\end{align}
where $\operatorname{sort}_{\downarrow}$ sorts the entries in weakly decreasing order.
Equivalently, the new zero is inserted at the wall between the positive and negative entries.

\begin{lemma}[Zero-activity criterion]
\label{lem:mixed_zero_activity}
For every rational Gelfand--Tsetlin pattern and every $r\in[d]$,
\begin{align}
\label{eq:mixed_activity_zero_criterion}
    \operatorname{act}_r(M)=0
    \quad\Longleftrightarrow\quad
    M_r=\iota_0(M_{r-1}).
\end{align}
\end{lemma}
\begin{proof}
Interlacing gives the two inclusions of positive and negative diagrams above, so $\mu_r^+,\mu_r^-\geq0$.
If $\operatorname{act}_r(M)=0$, both terms $\mu_r^+,\mu_r^-$ vanish.
The positive and negative nonzero entries of $M_r$ therefore coincide with those of $M_{r-1}$.
Since $M_r$ has one additional coordinate, that coordinate must be zero, and weakly decreasing order places it exactly as in $\iota_0(M_{r-1})$.
The converse is immediate because canonical zero extension leaves both diagrams unchanged.
\end{proof}

The sparse alphabet is therefore the increasing list associated with
\begin{align}
\label{eq:mixed_activity_support}
    p(M)
    \coloneqq
    \{r\in[d]:\operatorname{act}_r(M)>0\},
\end{align}
not the set of indices for which $w_r\neq0$.  
If the top row is $[\alpha,\beta]$, the activity is
\begin{align}
\label{eq:mixed_activity_telescopes}
    \sum_{r=1}^{d}\operatorname{act}_r(M)
    =
    |\alpha|+|\beta|.
\end{align}
Indeed, the positive increments sum to $|\alpha|$, and the negative increments sum to $|\beta|$.
If $k_+$ defining and $k_-$ dual defining factors have been used, rational Pieri branching gives $|\alpha|=k_+-s$ and $|\beta|=k_--s$ for some contraction number $s$~\cite{nguyen2023mixed,grinko2023gelfand}.
Consequently, after at most $k$ total factors, the activity support remains small:
\begin{align}
    p(M)
    \leq
    \sum_{r=1}^{d}\operatorname{act}_r(M)
    =
    |\alpha|+|\beta|
    \leq
    k_+ + k_-
    \leq k.
\end{align}
So, the alphabet positions still use only $O(k\log d)$ bits, and the retained rational pattern has at most $k$ nontrivial branching rows.

Steps $A$ and $B$ are unchanged after replacing $p^{(k-1)}$ by the activity support in Eq.~\eqref{eq:mixed_activity_support}.
They compute whether the incoming physical label $x_k$ is already active and update the sorted support list.
The only changes are in preprocessing steps $C$ and $D$.
First, every polynomial padding operation $(\nu,0)$ in step $C$ must be replaced by the insertion $\iota_0(\nu)$. 
Thus the mixed row-copying rule is
\begin{align}
\label{eq:mixed_C_gate}
    c_k=1:
    \quad
    N_l^{(k-1)}
    &=
    \begin{cases}
    \widetilde M_l^{(k-1)},&1\leq l\leq k-1,\\
    \iota_0\!\left(\widetilde M_{k-1}^{(k-1)}\right),&l=k,
    \end{cases}
    \\
    c_k=0:
    \quad
    N_l^{(k-1)}
    &=
    \begin{cases}
    \widetilde M_l^{(k-1)},&1\leq l<\tilde x_k,\\
    \iota_0\!\left(\widetilde M_{l-1}^{(k-1)}\right),&\tilde x_k\leq l\leq k.
    \end{cases}
\end{align}
We keep the convention
\begin{align}
    c_k=1
    \quad\Longleftrightarrow\quad
    x_k\in p^{(k-1)}.
\end{align}
After the row-copying step, we define
\begin{align}
\label{eq:mixed_activity_gap}
    \Delta^{\mathrm{mix}}_{\tilde x_k}(N)
    \coloneqq
    \sum_{i=1}^{\tilde x_k}
    \left|N^{(k-1)}_{\tilde x_k,i}\right|
    -
    \sum_{i=1}^{\tilde x_k-1}
    \left|N^{(k-1)}_{\tilde x_k-1,i}\right|.
\end{align}
On valid preprocessing inputs, this is precisely the activity of the branching edge at $\tilde x_k$.  
Therefore,
\begin{align}
    c_k=1
    &\quad\Longleftrightarrow\quad
    \Delta^{\mathrm{mix}}_{\tilde x_k}(N)>0,
    &
    c_k=0
    &\quad\Longleftrightarrow\quad
    \Delta^{\mathrm{mix}}_{\tilde x_k}(N)=0.
\end{align}
The reversible uncomputation of $c_k$ is therefore the gate
\begin{align}
\label{eq:mixed_D_gate}
    D_{\mathrm{mix}}:
    \ket{c_k}\ket{\tilde x_k}\ket{N}
    \longmapsto
    \ket{c_k\oplus\mathbf 1\!\left[
    \Delta^{\mathrm{mix}}_{\tilde x_k}(N)>0
    \right]}
    \ket{\tilde x_k}\ket{N}.
\end{align}
On valid preprocessing inputs this sends $c_k$ to $0$.
An implementation can avoid absolute-value arithmetic by testing the equivalent predicate
\begin{align}
\label{eq:mixed_D_comparator}
    \mathbf 1\!\left[
    \Delta^{\mathrm{mix}}_{\tilde x_k}(N)>0
    \right]
    =
    \mathbf 1\!\left[
    N^{(k-1)}_{\tilde x_k}
    \neq
    \iota_0\!\left(N^{(k-1)}_{\tilde x_k-1}\right)
    \right].
\end{align}
On valid preprocessing inputs this predicate is equal to $c_k$.
Computing this predicate into scratch ancilla register, flipping $c_k$, and uncomputing the scratch ancilla realizes the same gate.

The distinction between signed weight and activity is already visible in the adjoint highest weight $(1,0,\ldots,0,-1)$.  For each $s\in\{2,\ldots,d\}$, consider the rational pattern
\begin{align}
\label{eq:mixed_adjoint_pattern}
    M_j
    &=
    (0,\ldots,0),
    &&j<s,
    \\
    M_j
    &=
    (1,0,\ldots,0,-1),
    &&j\geq s.
\end{align}
These are the $d-1$ zero-weight Gelfand--Tsetlin basis states of this irrep.
Every signed component vanishes, but
\begin{align}
    \operatorname{act}_s(M)=2,
    \qquad
    \operatorname{act}_j(M)=0\quad(j\neq s),
    \qquad
    p(M)=\{s\}.
\end{align}
Thus the location of the nontrivial zero-weight branching edge stores the missing $\log(d-1)$ qubits.
A support rule based on $w_j\neq0$ would erase this information.

\section{\texorpdfstring{$\mathfrak{S}_d$ oracles}{Symmetric group oracles}}
\label{sec:permutation_fourier_database}

For $G=\mathfrak{S}_d$ we use the defining permutation representation $V_\pi\ket{x}=\ket{\pi(x)}$ on $\mathbb{C}^d$.
The Fourier memory can be obtained from the ordinary group algebra by the quantum Fourier transform
\begin{align}
\label{eq:perm_qft}
    \mathrm{QFT}_{\mathfrak{S}_d}\ket{\pi}
    =
    \bigoplus_{\lambda\vdash d}
    \sqrt{\frac{d_\lambda}{d!}}
    \sum_{P,Q}
    \rho_\lambda(\pi)_{P,Q}
    \ket{\lambda,P,\bar Q}.
\end{align}
In this basis, one query is the Clebsch--Gordan update for tensoring an $\mathfrak{S}_d$ irrep with the defining representation.
Writing the corresponding coefficients as $\CG^{\lambda,R}_{\mu,P,x}$ and $\CG^{\mu,Q}_{\lambda,R',y}$,
\begin{align}
\label{eq:perm_fourier_oracle_action}
    \fO\ket{x}\ket{\mu,P,\bar Q}
    =
    \sum_{y=1}^{d}
    \ket{y}
    \sum_{\lambda,R,R'}
    \CG^{\lambda,R}_{\mu,P,x}
    \overline{\CG^{\mu,Q}_{\lambda,R',y}}\,
    \ket{\lambda,R,\bar R'}.
\end{align}
Multiplicity labels, if present, are included in $R$ and $R'$.
This is the specialization of the general Fourier oracle to the permutation representation.

The standard database object for permutations is a partial injection $I:A\hookrightarrow B$, with $A,B\subseteq[d]$.
Its consistent-permutation state in the group algebra is
\begin{align}
\label{eq:perm_omega_I}
    \ket{\Omega_I}
    \coloneqq
    \frac{1}{\sqrt{(d-|I|)!}}
    \sum_{\pi:\,\pi|_A=I}
    \ket{\pi}.
\end{align}
The exact oracle update on these states is the familiar partial-injection rule
\begin{align}
\label{eq:perm_partial_update}
    \fO_{\Omega}\ket{x}\ket{\Omega_I}
    =
    \begin{cases}
    \ket{I(x)}\ket{\Omega_I},
    & x\in\operatorname{dom}(I),\\[2mm]
    \displaystyle
    \frac{1}{\sqrt{d-|I|}}
    \sum_{y\notin\operatorname{Im}(I)}
    \ket{y}\ket{\Omega_{I\cup\{x\mapsto y\}}},
    & x\notin\operatorname{dom}(I).
    \end{cases}
\end{align}
Applying Eq.~\eqref{eq:perm_qft} gives the Fourier coordinates of this database state,
\begin{align}
\label{eq:perm_omega_fourier}
    \ket{\widehat{\Omega}_I}
    \coloneqq
    \mathrm{QFT}_{\mathfrak{S}_d}\ket{\Omega_I}
    =
    \bigoplus_{\lambda\vdash d}
    \sqrt{\frac{d_\lambda}{d!(d-|I|)!}}
    \sum_{P,Q}
    \sum_{\pi:\,\pi|_A=I}
    \rho_\lambda(\pi)_{P,Q}
    \ket{\lambda,P,\bar Q}.
\end{align}
Thus the partial-injection update and the Fourier Clebsch--Gordan update are the same exact oracle written in two bases.
The states $\ket{\Omega_I}$ are useful because the update rule is local in the partial injection.
The Fourier states $\ket{\lambda,P,\bar Q}$ are useful because exactness follows directly from Peter--Weyl orthogonality and Clebsch--Gordan unitarity.

\subsection{\texorpdfstring{Comparison with Carolan's compressed permutation oracle}{Comparison with Carolan's compressed permutation oracle}}
\label{sec:comparison_carolan_permutation_oracle}

Carolan's oracle~\cite{carolan2025compressed} is defined on an orthonormal database basis $\{\ket{I}\}_I$ indexed by partial injections $I$.
In the notation used above, for every $x\notin\operatorname{dom}(I)$ the compression operation $\mathrm{pC}_x$ swaps
\begin{align}
\label{eq:carolan_compression_swap}
    \ket{I}
    \quad\longleftrightarrow\quad
    \frac{1}{\sqrt{d-|I|}}
    \sum_{y\notin\operatorname{Im}(I)}
    \ket{I\cup\{x\mapsto y\}},
\end{align}
and acts as the identity on the orthogonal complement of the direct sum of these two-dimensional subspaces.
The controlled compression is
\begin{align}
    \mathrm{pC}\ket{x}\ket{I}
    =
    \ket{x}\,\mathrm{pC}_x\ket{I}.
\end{align}
The partial-table query acts on the standard permutation-oracle registers as
\begin{align}
\label{eq:carolan_partial_query}
    P\ket{x,z}\ket{I}
    =
    \begin{cases}
    \ket{x,z\oplus_d I(x)}\ket{I},
    &x\in\operatorname{dom}(I),\\
    \ket{x,z}\ket{I},
    &x\notin\operatorname{dom}(I),
    \end{cases}
\end{align}
while inversion of the database is the unitary
\begin{align}
\label{eq:carolan_flip}
    F\ket{I}=\ket{I^{-1}}.
\end{align}
With the database initialized in the empty state, Carolan's forward and inverse compressed queries are therefore
\begin{align}
\label{eq:carolan_oracles}
    \mathrm{pC}\,P\,\mathrm{pC}^{\dagger}
    \qquad\text{and}\qquad
    F\,\mathrm{pC}\,P\,\mathrm{pC}^{\dagger}\,F^{\dagger},
\end{align}
respectively.
Both $\mathrm{pC}$ and $F$ are involutions, so the daggers may be suppressed.
The second formula first reverses the partial injection, applies the same forward compressed query, and reverses it again.

The mathematical difference from our oracle is the following.
The state $\ket{\Omega_I}$ in Eq.~\eqref{eq:perm_omega_I} is a normalized superposition of all full permutations extending $I$; different compatible $I$ give nonorthogonal states.
On that nonorthogonal family, Eq.~\eqref{eq:perm_partial_update} is an exact identity for the purified random permutation.
Carolan instead declares the records $\ket I$ orthogonal and encodes the required correlations by conjugating the partial-table query with $\mathrm{pC}$.
Consequently, replacing $\ket{\Omega_I}$ by $\ket I$ in Eq.~\eqref{eq:perm_partial_update} does not give Eq.~\eqref{eq:carolan_oracles}.
The two compression layers create interference between an undefined entry and the uniform superposition of its collision-free extensions, and the database update need not be append-only.

For tensor powers of the defining permutation representation of $\mathfrak S_d$, the Schur--Weyl commutant is the image of the partition algebra $P_t(d)$, with a kernel outside the stable range~\cite{halverson2005partition,benkart2017dimensions,bowman2022integral}.
Consequently, the Schur transform that block diagonalizes $(\mathbb C^d)^{\otimes t}$ for this representation is a partition-algebra Schur transform: it separates irreducible $\mathfrak S_d$ modules from the corresponding partition-algebra multiplicity spaces.
This should be distinguished from the ordinary group-algebra Peter--Weyl basis of $\mathbb C[\mathfrak S_d]$ used by our Fourier oracle.
In other words, if one reaches the visible $t$-query subspace through tensor powers of the defining representation, the relevant Schur transform is governed by the partition algebra, whereas the Fourier memory itself is expressed in the Peter--Weyl basis exposed by $\mathrm{QFT}_{\mathfrak S_d}$.
The two constructions therefore optimize different properties.
The Fourier oracle in Eq.~\eqref{eq:perm_fourier_oracle_action}, equivalently the consistent-state oracle in Eq.~\eqref{eq:perm_partial_update}, is exact and canonical from Peter--Weyl theory, but its database records are nonorthogonal.
In contrast, Carolan's oracle has orthogonal partial-injection records.
Its relation to the truly random bidirectional permutation oracle is instead approximate: Ref.~\cite[Theorem~5.19]{carolan2025compressed} bounds the distinguishing advantage of a $t$-query algorithm by $O(t^3/d^{1/4})$.

\subsection{\texorpdfstring{Equivalence of the oracle accesses to $U_g, U_{g^{-1}}$ and those to $V_g, V_{g^{-1}}$}{Equivalence of oracle accesses}}
\label{sec:equivalence_permutation_oracle}
The oracle accesses to $U_g, U_{g^{-1}}$ defined in Eq.~\eqref{eq:def_Ug} and those to $V_g, V_{g^{-1}}$ defined in Eq.~\eqref{eq:def_Vg} are equivalent since either can be simulated using two queries to the other.
The simulation can be done as follows:
\begin{align}
    U_g\ket{x,y}
    &= (V_{g^{-1}} \otimes I_d)\mathrm{CX}_d(V_g \otimes I_d)\ket{x,y},\\
    U_{g^{-1}}(I_d\otimes N_d)\mathrm{SWAP}\,U_g\ket{x,0}
    &= V_g\ket{x}\ket{0}.
\end{align}
Thus the second register in the second line is deterministically uncomputed to $\ket{0}$ and can be discarded.
Here $I_d$ is the identity operator on $\mathbb{C}^d$, $\mathrm{CX}_d$ and $\mathrm{SWAP}$ are two-qudit unitaries, and $N_d$ is the one-qudit modular-negation unitary defined by
\begin{align}
    \mathrm{CX}_d
    &\coloneqq
    \sum_{x,y\in\mathbb Z_d}\ketbra{x,x\oplus_d y}{x,y},
    \\
    \mathrm{SWAP}
    &\coloneqq
    \sum_{x,y\in\mathbb Z_d}\ketbra{y,x}{x,y},
    \\
    N_d
    &\coloneqq
    \sum_{z\in\mathbb Z_d}\ketbra{-z}{z}.
\end{align}

\end{document}